\newtheorem{theorem}{Theorem}
\newtheorem{lemma}[theorem]{Lemma}
\newtheorem{definition}[theorem]{Definition}
\newenvironment{proof}[1][Proof]{\textbf{#1.} }{\ \rule{0.5em}{0.5em}}
\begin{document}

\title{Exact form factors of the $SU(N)$ Gross-Neveu model and $1/N$ expansion}
\author{Hrachya M. Babujian%
\thanks{Address: Alikhanian Brothers 2, Yerevan, 375036Armenia}
\thanks{E-mail: babujian@yerphi.am} , Angela Foerster%
\thanks{Address: Instituto de F\'{\i}sica da UFRGS, Av. Bento Gon\c{c}alves 9500,
Porto Alegre, RS - Brazil}
\thanks{E-mail: angela@if.ufrgs.br} , and Michael Karowski%
\thanks{E-mail: karowski@physik.fu-berlin.de}\\
Institut f\"{u}r Theoretische Physik, Freie Universit\"{a}t Berlin,\\
Arnimallee 14, 14195 Berlin, Germany }
\date{\today
\vskip6mm \textit{{\small Dedicated to the memory of Alexey Zamolodchikov}}}
%\date{January 3, 2007 }
\maketitle

\begin{abstract}
The general $SU(N)$ form factor formula is constructed. Exact form factors for
the field, the energy momentum and the current operators are derived and
compared with the $1/N$-expansion of the chiral Gross-Neveu model and full
agreement is found. As an application of the form factor approach the equal
time commutation rules of arbitrary local fields are derived and in general
anyonic behavior is found.\\[8pt]
PACS: 11.10.-z; 11.10.Kk; 11.55.Ds \newline
Keywords: Integrable quantum field theory, Form factors

\end{abstract}

\section{Introduction}

Quantum chromodynamics, the theory of the strong interactions, is a non
abelian gauge theory based on the gauge group $SU(3)$. It was first pointed
out by 't Hooft \cite{Hooft1,Hooft2} that many features of QCD can be
understood by studying a gauge theory based on the gauge group $SU(N)$ in the
limit $N$ $\rightarrow\infty$. One might think that letting $N\rightarrow
\infty$ would make the analysis more complicated because of the larger gauge
group and consequently increase in the number of dynamical degrees of freedom.
Also one can think that $SU(N)$ gauge theory has very little to do with QCD
because $N\rightarrow\infty$ is not close to $N=3$. However it is well known
that the $1/N$ expansion provides good results which can be compared with
experiments \cite{Mano}.

One of the most important trends in theoretical physics in the last decades is
the development of exact methods which are completely different from
perturbation theory. Resolution of the strong coupling problem would give us a
full understanding of the structure of interactions in nonabelian gauge
theory. One promising possibility of overcoming the limitations of
perturbation theory is the application of exact integrability. From this point
of view the two dimensional integrable quantum field theories are in a sense a
laboratory for investigations of those properties of quantum field theories,
which cannot be described via perturbation theory.

The chiral $SU(N)$ Gross-Neveu \cite{GN} model given by the Lagrangian
\begin{equation}
\mathcal{L}=\sum_{i=1}^{N}\bar{\psi}_{i}\,i\gamma\partial\,\psi_{i}%
+\frac{g^{2}}{2}\left(  \left(  \sum_{i=1}^{N}\bar{\psi}_{i}\psi_{i}\right)
^{2}-\left(  \sum_{i=1}^{N}\bar{\psi}_{i}\gamma^{5}\psi_{i}\right)
^{2}\right)  \label{1.2}%
\end{equation}
is an interesting $1+1$ dimensional field theory that can be studied using the
$1/N$ expansion. The model is asymptotically free with a spontaneously broken
chiral symmetry, and so shares some dynamical features with QCD. Gross and
Neveu \cite{GN} investigated the\ model using an $1/N$ expansion. Apparently a
chiral $U(1)$-symmetry is spontaneously broken, the fermions acquire mass and
a Goldstone boson seems to appear. This is of course not possible in two
space-time dimensions and severe infrared divergences appear due the
\textquotedblleft would-be-Goldstone boson\textquotedblright. However, it has
been argued by Witten \cite{Wi} that dynamical mass generation can be
reconciled with the absence of spontaneous symmetry breaking. There exist
further (different) approaches to overcome these problems and to formulate a
$1/N$ expansion \cite{KuS,ABW,KKS} (see also \cite{AAR}). On shell they all
agree and are consistent with the exact S-matrix (\ref{2.2}). We follow here
the approach of Swieca et al.~\cite{KKS} where additional fields are
introduced in order to compensate the infrared divergences. The authors claim
that since the physical fermions have lost not only the chiral $U(1)$ but also
the charge $U(1)$ symmetry, they transform accordingly to pure $SU(N)$. They
propose an interpretation of the antiparticles as a bound state of $N-1$
particles. Furthermore this means that the particles satisfy neither Fermi nor
Bose statistics, but rather carry \textquotedblleft spin\textquotedblright%
\ $s=\frac{1}{2}\left(  1-1/N\right)  $. As a consequence there are unusual
crossing relations and Klein factors.

In this article we will focus on the $SU(N)$ Gross-Neveu \underline{form
factors} using the \textquotedblleft bootstrap program\textquotedblright%
\ \cite{K,K2}. We provide here some examples, calculate the form factors
exactly and compare the results with field theoretical $1/N$ expansions. We
emphasize that in addition to the operators in the vacuum sector, such as the
energy momentum tensor and the current, we also consider anyonic operators as
the fundamental fields. We also derive the equal time commutation rules for
local operators which are in particular complicated due to the unusual
crossing formulae related to the Klein factors.

The general form factor of an operator $\mathcal{O}(x)$ for n-particles, which
is a co-vector valued function and can be written as \cite{KW}
\[
F_{\underline{\alpha}}^{\mathcal{O}}(\underline{\theta})=K_{\underline{\alpha
}}^{\mathcal{O}}(\underline{\theta})\prod_{1\leq i<j\leq n}F(\theta_{ij})
\]
where $\underline{\theta}=(\theta_{1},\dots,\theta_{n})$ is the set of
rapidities of the particles $\underline{\alpha}=(\alpha_{1},\dots,\alpha_{n}%
)$. The scalar function $F(\theta)$ is the minimal form factor function and
the K-function $K_{\underline{\alpha}}^{\mathcal{O}}(\underline{\theta})$
contains the entire pole structure and its symmetry is determined by the form
factor equations (i) to (v) \cite{BFK1}. To construct the K-function we must
apply the nested off-shell Bethe ansatz to capture the vectorial content of
the form factors. This solves the missing link of Smirnov's \cite{Sm} formula
for the $SU(N)$ form factors, where the vectors were given by an
\textquotedblleft indirect definition\textquotedblright\ characterized by
necessary properties but not provided explicitly. We note that $SU(N)$ form
factors were also calculated in \cite{Sm,NT,Ta} using other techniques, see
also the related papers \cite{MTV,Pa}. Our results apply not only to
chargeless operators such as the energy momentum and the current operators but
also to more general ones with anyonic behavior. We believe that our integral
representation, besides of being appropriate for a comparison with field
theoretical $1/N$ expansions, may also shed some light for a better
understanding on the correlation functions of models with more general
(anyonic) statistics.

The paper is organized as follows: In section \ref{s2} we present the general
setting concerning the $SU(N)$ S-matrix, the nested off-shell Bethe ansatz and
the chiral Gross-Neveu Lagrangian field theory. We review known results and
derive some further formulae which we need in the following. In section
\ref{s3} we construct the general form factor formula and present some
examples in detail, such as the form factors of the energy-momentum tensor,
the Dirac field and the $SU(N)$ current. In section \ref{s4} we compare our
exact results against $1/N$ perturbation theory of the $SU(N)$ Gross-Neveu
model. In section \ref{s5} we present the commutation rules of the fields. Our
conclusions are stated in section \ref{s6}. In appendix \ref{sb} we provide
the general proof of the bound state form factor formula and in appendix
\ref{sc} the commutation rule of two fields (in general anyonic) is proved.

\section{General setting}

\label{s2}The particle spectrum of the chiral $SU(N)$ Gross-Neveu model
consists of $N-1$ multiplets of particles of mass $m_{r}=m_{1}\sin\left(
r\pi/N\right)  /\sin\left(  \pi/N\right)  $, which correspond to all
fundamental $SU(N)$ representations of rank $r=1,\dots,N-1$ with
representation spaces $V^{(r)}$ of dimension $\binom{N}{r}$. Let
$(\alpha)=(\alpha_{1},\dots,\alpha_{r}),~(1\leq\alpha_{1}<\dots<\alpha_{r}\leq
N)$ be a particle of rank $r$. We write%
\[
(\alpha)\in V=\bigoplus_{r=1}^{N-1}V^{(r)},~V^{(r)}\simeq\mathbb{C}^{\binom
{N}{r}}%
\]
where the $(\alpha)$ form a basis of $V$. A particle of rank $r$ is a bound
state of $r$ particles of rank 1. The antiparticle corresponding to $(\alpha)$
is $(\bar{\alpha})=(\bar{\alpha}_{1},\dots,\bar{\alpha}_{N-r}),~(1\leq
\bar{\alpha}_{1}<\dots<\bar{\alpha}_{N-r}\leq N)$ (of rank $N-r$) such that
the union of the set of indices satisfies $\{\alpha_{1},\dots,\alpha_{r}%
\}\cup\{\bar{\alpha}_{1},\dots,\bar{\alpha}_{N-r}\}=\{1,\dots,N\}$.

\subsection{The S-matrix}

The S-matrix for the scattering of two particles $\alpha,\beta$ (of rank 1)
\cite{BKKW,BW,KuS,KKS} is%
\begin{equation}
S_{\alpha\beta}^{\delta\gamma}(\theta)=\delta_{\alpha}^{\gamma}\delta_{\beta
}^{\delta}b(\theta)+\delta_{\alpha}^{\delta}\delta_{\beta}^{\gamma}c(\theta)
\label{2.2}%
\end{equation}
where $\theta=\theta_{1}-\theta_{2}$ is the rapidity difference and
$p_{1,2}^{\mu}=m\left(  \cosh\theta_{1,2},\sinh\theta_{1,2}\right)  $. The
amplitudes satisfy%
\[
c(\theta)=-\frac{i\eta}{\theta}b(\theta)\,,~~\eta=\frac{2\pi}{N}\,,
\]%
\[
a(\theta)=b(\theta)+c(\theta)=-\frac{\Gamma\left(  1-\frac{\theta}{2\pi
i}\right)  \Gamma\left(  1-\frac{1}{N}+\frac{\theta}{2\pi i}\right)  }%
{\Gamma\left(  1+\frac{\theta}{2\pi i}\right)  \Gamma\left(  1-\frac{1}%
{N}-\frac{\theta}{2\pi i}\right)  }\,.
\]
We also need the S-matrix for the scattering of a bound state $(\rho
)=(\rho_{1},\dots,\rho_{N-1}),~(1\leq\rho_{1}<\dots<\rho_{N-1}\leq N)$ (of
rank $N-1$) and a particle $\alpha$ (of rank 1)%
\begin{equation}
S_{(\rho)\alpha}^{\beta(\sigma)}(\theta)=(-1)^{N-1}\left(  \delta_{(\rho
)}^{(\sigma)}\delta_{\alpha}^{\gamma}\,b(\pi i-\theta)+\mathbf{C}%
^{\beta(\sigma)}\mathbf{C}_{(\rho)\alpha}\,c(\pi i-\theta)\right)  \label{2.4}%
\end{equation}
where the charge conjugation matrices are defined by%
\[%
\begin{array}
[c]{c}%
\mathbf{C}_{(\alpha_{1}\dots\alpha_{N-1})\alpha_{N}}=\mathbf{C}_{\alpha
_{1}(\alpha_{2}\dots\alpha_{N})}=\epsilon_{\alpha_{1}\alpha_{2}\dots\alpha
_{N}}\\[5pt]%
\mathbf{C}^{\alpha_{1(}\alpha_{2}\dots\alpha_{N})}=\mathbf{C}^{(\alpha
_{1}\dots\alpha_{N-1})\alpha_{N}}=(-1)^{N-1}\epsilon^{\alpha_{1}\alpha
_{2}\dots\alpha_{N}}%
\end{array}
\]
with $\epsilon_{\alpha_{1}\dots\alpha_{N}}$ and $\epsilon^{\alpha_{1}%
\dots\alpha_{N}}$ totally anti-symmetric and $\epsilon_{1\dots N}%
=\epsilon^{1\dots N}=1$. Formula (\ref{2.4}) is obtained by applying
iteratively the bound state fusion method \cite{K1} to (\ref{2.2}).

For later convenience we extract the factors $a(\theta)$ and $(-1)^{N-1}%
b(i\pi-\theta)$, respectively%
\begin{align}
S_{\alpha\beta}^{\delta\gamma}(\theta)  &  =a(\theta)\tilde{S}_{\alpha\beta
}^{\delta\gamma}(\theta)\label{2.6}\\
S_{(\rho)\alpha}^{\beta(\sigma)}(\theta)  &  =(-1)^{N-1}b(i\pi-\theta
)\tilde{S}_{(\rho)\alpha}^{\beta(\sigma)}(\theta) \label{2.8}%
\end{align}
such that%
\begin{gather}
\tilde{S}_{\alpha\beta}^{\delta\gamma}(\theta)=\delta_{\alpha}^{\gamma}%
\delta_{\beta}^{\delta}\tilde{b}(\theta)+\delta_{\alpha}^{\delta}\delta
_{\beta}^{\gamma}\tilde{c}(\theta)\label{2.10}\\
\tilde{S}_{(\rho)\alpha}^{\beta(\sigma)}(\omega)=\delta_{(\rho)}^{(\sigma
)}\delta_{\alpha}^{\beta}+\mathbf{C}^{\beta(\sigma)}\mathbf{C}_{(\rho)\alpha
}\tilde{d}(\omega)\label{2.12}\\
\tilde{b}(\theta)=\frac{\theta}{\theta-i\eta}\,,~~\tilde{c}(\theta
)=\frac{-i\eta}{\theta-i\eta},~~\tilde{d}(\omega)=\frac{c(i\pi-\omega)}%
{b(i\pi-\omega)}=\frac{-i\eta}{i\pi-\omega},~\eta=\frac{2\pi}{N}\,.\nonumber
\end{gather}
where $\delta_{(\rho)}^{(\sigma)}=\delta_{\rho_{1}}^{\sigma_{1}}\delta
_{\rho_{2}}^{\sigma_{2}}...\delta_{\rho_{N-1}}^{\sigma_{N-1}}$. Below we will
also use for the matrices (\ref{2.10}) and (\ref{2.12}) the notations
$\tilde{S}_{12}(\theta)$ and $\tilde{S}_{\bar{1}2}(\theta)$, respectively.

\subsection{Nested \textquotedblleft off-shell\textquotedblright\ Bethe
ansatz}

The \textquotedblleft off-shell\textquotedblright\ Bethe ansatz is used to
construct vector valued functions which have symmetry properties according to
a representation of the permutation group generated by a factorizing S-matrix.
In addition they satisfy matrix differential \cite{B1} or difference
\cite{BKZ2} equations. For the application to form factors we use the
co-vector version $K_{_{1\dots n}}(\underline{\theta})\in V_{1\dots n}=\left(
\bigotimes_{i=1}^{n}V\right)  ^{\dag},~(\theta_{i}\in\mathbb{C},\mathbb{~}%
i=1,\dots,n)$%
\begin{align*}
K_{\dots ij\dots}(\dots,\theta_{i},\theta_{j},\dots)  &  =K_{\dots ji\dots
}(\dots,\theta_{j},\theta_{i},\dots)\,\tilde{S}_{ij}(\theta_{ij})\\
K_{1\dots n}(\underline{\theta}^{\prime})  &  =K_{1\dots n}(\underline{\theta
})Q_{1\dots n}(\underline{\theta},i)
\end{align*}
where $\underline{\theta}^{\prime}=(\theta_{1},\dots,\theta_{i}+2\pi
i,\dots,\theta_{n})$ (see below and e.g.~\cite{BKZ2,BFK1}). We write the
components of the co-vector $K_{_{1\dots n}}$ as $K_{\underline{\alpha}}$
where $\underline{\alpha}=((\alpha_{11},\dots,\alpha_{1r_{1}}),\dots
,(\alpha_{n1},\dots,\alpha_{nr_{n}}))$ is a state of $n$ particles of rank
$r_{1},\dots,r_{n}$.

The nested $SU(N)$ \textquotedblleft off-shell\textquotedblright\ Bethe ansatz
for particles of rank $1$ has been constructed in \cite{BFK1}. Here we need a
more general case.

\paragraph{Nested \textquotedblleft off-shell\textquotedblright\ Bethe ansatz
for particles of rank $1$ and $N-1:$}

We consider a state with $n$ particles of rank $1$ and $\bar{n}$ particles of
rank $N-1$ and write the off-shell Bethe ansatz co-vector valued function as
\begin{equation}
\fbox{$\rule[-0.2in]{0in}{0.5in}\displaystyle~K_{\underline{\alpha}%
\underline{(\rho)}}(\underline{\theta},\underline{\omega})=\int_{\mathcal{C}%
_{\underline{\theta}}\underline{\omega}}dz_{1}\cdots\int_{\mathcal{C}%
_{\underline{\theta}\underline{\omega}}}dz_{m}\,k(\underline{\theta
},\underline{\omega},\underline{z})\,\tilde{\Psi}_{\underline{\alpha
}\underline{(\rho)}}(\underline{\theta},\underline{\omega},\underline{z})$~}
\label{2.14}%
\end{equation}
where $\underline{\alpha}=(\alpha_{1},\dots,\alpha_{n}),~\underline{(\rho
)}=((\rho_{1}),\dots,(\rho_{\bar{n}}))=((\rho_{11},\dots,\rho_{1N-1}%
),\dots,(\rho_{\bar{n}1},\dots,\rho_{\bar{n}N-1}))$, $\underline{\theta
}=\left(  \theta_{1},\dots,\theta_{n}\right)  ,~\underline{\omega}=\left(
\omega_{1},\dots,\omega_{\bar{n}}\right)  $ and $\underline{z}=\left(
z_{1},\dots,z_{m}\right)  $. This ansatz transforms the complicated matrix
equations to simple equations for the scalar function $k(\underline{\theta
},\underline{\omega},\underline{z})$ (see \cite{BFK1} and below). The
integration contour $\mathcal{C}_{\underline{\theta}\underline{\omega}}$ can
in general be characterized as follows: there is a finite number of complex
numbers $a_{i}(\underline{\theta}),\,b_{j}(\underline{\theta})$ such that the
positions of all poles of the integrand are of the form%
\begin{equation}%
\begin{array}
[c]{l}%
(1):a_{i}(\underline{\theta})+2\pi ik\,,~~k\in\mathbb{N}\\
(2):b_{j}(\underline{\theta})-2\pi il\,,~~l\in\mathbb{N}%
\end{array}
\label{2.16}%
\end{equation}
and $\mathcal{C}_{\underline{\theta}\underline{\omega}}$ runs from $-\infty$
to $+\infty$ such that all poles (1) are above and all poles (2) are below the
contour. This contour is just the same as the one used for the definition of
Meijer's G-function. It will turn out that for the examples considered below
the form factors can be expressed in terms of Meijer's G-functions.

The state $\tilde{\Psi}_{\underline{\alpha}\underline{(\rho)}}$ in
(\ref{2.14}) is a linear combination of the basic Bethe ansatz co-vectors
\begin{equation}
\tilde{\Psi}_{\underline{\alpha}\underline{(\rho)}}(\underline{\theta
},\underline{\omega},\underline{z})=L_{\underline{\beta}\underline{(\sigma)}%
}(\underline{z},\underline{\omega})\tilde{\Phi}_{\underline{\alpha}%
\underline{(\rho)}}^{\underline{\beta}\underline{(\sigma)}}(\underline{\theta
},\underline{\omega},\underline{z})\,,~~\text{with }1<\beta_{i},~\sigma
_{1j}=1\,. \label{2.18}%
\end{equation}
As usual in the context of the algebraic Bethe ansatz \cite{FST,TF} the basic
Bethe ansatz co-vectors are obtained from the monodromy matrix
\begin{align*}
\tilde{T}_{1\dots n,\bar{1}\dots\bar{n},0}(\underline{\theta},\underline
{\omega},\theta_{0})  &  =\tilde{S}_{10}(\theta_{1}-\theta_{0})\,\cdots
\tilde{S}_{n0}(\theta_{n}-\theta_{0})\tilde{S}_{\bar{1}0}(\omega_{1}%
-\theta_{0})\,\cdots\tilde{S}_{\bar{n}0}(\omega_{\bar{n}}-\theta_{0})\\
&  \equiv\left(
\begin{array}
[c]{cc}%
\tilde{A}_{1\dots n,\bar{1}\dots\bar{n}}(\underline{\theta},\underline{\omega
},z) & \tilde{B}_{1\dots n,\bar{1}\dots\bar{n},\beta}(\underline{\theta
},\underline{\omega},z)\\
\tilde{C}_{1\dots n,\bar{1}\dots\bar{n}}^{\beta}(\underline{\theta}%
,\underline{\omega},z) & \tilde{D}_{1\dots n,\bar{1}\dots\bar{n},\beta}%
^{\beta^{\prime}}(\underline{\theta},\underline{\omega},z)
\end{array}
\right)  ,~~2\leq\beta,\beta^{\prime}\leq N\,.
\end{align*}
where the S-matrices $\tilde{S}_{i0}$ and $\tilde{S}_{\bar{\imath}0}$ are
given by (\ref{2.10}) and (\ref{2.12}). As usual the Yang-Baxter algebra
relation for the S-matrix yields the typical $TTS$-relation which implies the
basic algebraic properties of the sub-matrices $A,B,C,D$.

Here not only one reference co-vector exists. The space of reference
co-vectors, defined as usual by%
\[
\Omega^{\underline{(\sigma)}}\tilde{B}_{\beta}=0\,,
\]
is $(N-1)^{\bar{n}}$ dimensional and is spanned by the co-vectors for all
$\underline{(\sigma)}=((\sigma_{11},\dots,\sigma_{1N-1}),$$\allowbreak\dots
,$$(\sigma_{\bar{n}1},\dots,\sigma_{\bar{n}N-1}))$ with $\sigma_{i1}%
=1<\sigma_{i2}<\dots<\sigma_{iN-1}\leq N.$ They are eigenstates of $\tilde{A}$
and $\tilde{D}_{\beta}^{\beta^{\prime}}$%
\[%
\begin{array}
[c]{rcl}%
\Omega\,^{\underline{(\sigma)}}\tilde{A}(\underline{\theta},\underline{\omega
},z) & = & \Omega^{\underline{(\sigma)}}\\
\Omega^{\underline{(\sigma)}}\tilde{D}_{\beta}^{\beta^{\prime}}(\underline
{\theta},\underline{\omega},z) & = & \delta_{\beta}^{\beta^{\prime}}%
\prod\limits_{i=1}^{n}\tilde{b}(\theta_{i}-z)\Omega^{\underline{(\sigma)}}\,.
\end{array}
\]
where the indices $1\dots n,\bar{1}\dots\bar{n}$ are suppressed. The basic
Bethe ansatz co-vectors in (\ref{2.18}) are defined as%
\begin{equation}
\tilde{\Phi}_{\underline{\alpha}\underline{(\rho)}}^{\underline{\beta
}\underline{(\sigma)}}(\underline{\theta},\underline{\omega},\underline
{z})=\left(  \Omega^{\underline{(\sigma)}}\tilde{C}^{\beta_{m}}(\underline
{\theta},\underline{\omega},z_{m})\cdots\tilde{C}^{\beta_{1}}(\underline
{\theta},\underline{\omega},z_{1})\right)  _{\underline{\alpha}\underline
{(\rho)}} \label{2.20}%
\end{equation}
where $1<\beta_{i}\leq N$.

The technique of the \textbf{`nested Bethe ansatz'} means that for the
coefficients $L_{\underline{\beta}\underline{(\sigma)}}(\underline
{z},\underline{\omega})$ in (\ref{2.18}) one makes the analogous construction
as for $K_{\underline{\alpha}\underline{(\rho)}}(\underline{\theta}%
,\underline{\omega})$ where now the indices $\underline{\beta},\underline
{(\sigma)}$ take only the values $2\leq\beta_{i}\leq N$ and $\sigma
_{i1}=1<\sigma_{i2}<\dots<\sigma_{iN-1}\leq N$. This nesting is repeated until
the space of the coefficients becomes one dimensional. It is well known (see
\cite{BKZ2}) that the `off-shell' Bethe ansatz states are highest weight
states if they satisfy a certain matrix difference equation. If there are only
$n$ particles of rank $1$, then the $SU(N)$ weights are
\begin{equation}
w=\left(  n-n_{1},n_{1}-n_{2},\dots,n_{N-2}-n_{N-1},n_{N-1}\right)
\label{2.22}%
\end{equation}
where $n_{1}=m,n_{2},\dots$ are the numbers of $C$ operators in the various
levels of the nesting. If in addition there are $\bar{n}$ particles of rank
$N-1$ the $SU(N)$ weights are
\begin{equation}
w=\left(  n-n_{1},n_{1}-n_{2},\dots,n_{N-2}-n_{N-1},n_{N-1}-\bar{n}\right)
+\bar{n}(1,\dots,1) \label{2.24}%
\end{equation}
because $N-1$ particles of rank 1 yield a bound state of rank $N-1$ and at the
$l^{th}$ level the number of $C$ operators is reduced by $N-l-1$ (see appendix
\ref{sb}).

\subsection{Minimal form factors and $\phi$-function}

To construct the form factors we need the form factor functions $F\left(
\theta\right)  $,~$G\left(  \theta\right)  $ and the function $\phi(\theta)$.
The form factor functions $F\left(  \theta\right)  $ and $G\left(
\theta\right)  $ for two particles of rank 1 and for one particle of rank 1
and one of rank $N-1$, respectively are
\begin{align}
F\left(  \theta\right)   &  =c\exp\int\limits_{0}^{\infty}\frac{dt}{t\sinh
^{2}t}e^{\frac{t}{N}}\sinh t\left(  1-\frac{1}{N}\right)  \left(  1-\cosh
t\left(  1-\frac{\theta}{i\pi}\right)  \right) \label{2.26}\\
G(\theta)  &  =c^{\prime}\exp\int\limits_{0}^{\infty}\frac{dt}{t\sinh^{2}%
t}e^{\frac{t}{N}}\sinh\frac{t}{N}\left(  1-\cosh t\left(  1-\frac{\theta}%
{i\pi}\right)  \right)  \,. \label{2.28}%
\end{align}
They are the minimal solutions of the equations
\begin{gather*}
F(\theta)=F(-\theta)a(\theta)\,,~~F(i\pi-\theta)=F(i\pi+\theta)\\
G(\theta)=-G(-\theta)b(\pi i-\theta)\,,~~G(i\pi-\theta)=G(i\pi+\theta)
\end{gather*}
where $a(\theta)$ and $b(\pi i-\theta)$ are the highest weight amplitudes of
the corresponding channels of (\ref{2.2}) and (\ref{2.4}). The $\phi
$-function
\begin{equation}
\tilde{\phi}(\theta)=\frac{1}{F(-\theta)G(i\pi+\theta)}=\Gamma\left(
\frac{-\theta}{2\pi i}\right)  \Gamma\left(  1-\frac{1}{N}+\frac{\theta}{2\pi
i}\right)  \label{2.30}%
\end{equation}
is a solution of%
\begin{equation}
\prod_{k=0}^{N-2}\tilde{\phi}\left(  -\theta-ki\eta\right)  \prod_{k=0}%
^{N-1}F\left(  \theta+ki\eta\right)  =1 \label{2.32}%
\end{equation}
which follows from the assumption that the antiparticle of a fundamental
particle is a bound state of $N-1$ of them (see below and \cite{BFK1}). The
constants $c$ and $c^{\prime}$ in (\ref{2.26}) and (\ref{2.28}) follow from
(\ref{2.30}) and (\ref{2.32}).

\subsection{Chiral Gross-Neveu model}

Swieca et al.~\cite{KKS} wrote the fermion fields $\psi_{i}(x)$ in the
Lagrangian (\ref{1.2}) in bosonic form. In order to extract the real particle
content of the theory, they introduced in addition the \textquotedblleft
physical\textquotedblright\ fields%
\begin{align*}
\hat{\psi}_{i}(x)  &  =\mathcal{K}_{i}\left(  \frac{m}{2\pi}\right)
^{1/2}e^{(\pi/4)\gamma^{5}}\exp\left\{  -i\sqrt{\pi}\left(  \gamma^{5}\phi
_{i}(x)+\int_{x}^{\infty}dy^{1}\dot{\phi}_{i}(y)\right)  \right\} \\
\phi_{i}(x)  &  =\left(  1-\frac{1}{N}\right)  \varphi_{i}(x)-\frac{1}{N}%
\sum_{j\neq i}\varphi_{j}(x)
\end{align*}
where $\varphi_{i}(x)$ are free canonical zero-mass fields and $\mathcal{K}%
_{i}$ are Klein factors satisfying%
\[
\mathcal{K}_{i}\hat{\psi}_{j}(x)=\left\{
\begin{array}
[c]{lll}%
\hat{\psi}_{j}(x)\mathcal{K}_{i} & \text{for} & i=j\\
-\hat{\psi}_{j}(x)\mathcal{K}_{i} & \text{for} & i\neq j~.
\end{array}
\right.
\]
The fields $\hat{\psi}$ satisfy (with a suitable normal product prescription
$\mathcal{N}$)
\begin{equation}
\hat{\psi}_{i}^{\dagger}=\mathcal{K}\frac{1}{(N-1)!}\sum_{\underline{j}%
}\epsilon_{ij_{1}\dots j_{N-1}}\mathcal{N}\hat{\psi}_{j_{1}}\dots\hat{\psi
}_{j_{N-1}}\,,~\mathcal{K}=%
%TCIMACRO{\dprod \limits_{j=1}^{N}}%
%BeginExpansion
{\displaystyle\prod\limits_{j=1}^{N}}
%EndExpansion
\mathcal{K}_{j} \label{2.34}%
\end{equation}
such that%
\begin{equation}
\mathcal{K}\hat{\psi}_{j}(x)=(-1)^{N-1}\hat{\psi}_{j}(x)\mathcal{K}\,.
\label{2.36}%
\end{equation}
Equation (\ref{2.34}) means that antiparticles should be identified with bound
state of $N-1$ particles and the creation operators of the antiparticle
$\hat{b}_{\alpha}^{\dagger}$ and of the bound state $\hat{a}_{(\varrho
)}^{\dagger}$ are related by%
\begin{align*}
\hat{b}_{\alpha}^{\dagger}  &  =\mathcal{K}\epsilon_{\alpha(\varrho)}\hat
{a}_{(\varrho)}^{\dagger}\\
\mathcal{K}\hat{a}_{\alpha}^{\dagger}\mathcal{K}  &  =(-1)^{N-1}\hat
{a}_{\alpha}^{\dagger}\,.
\end{align*}
The \textquotedblleft physical\textquotedblright\ fields satisfy the anyonic
commutation relations%
\[
\hat{\psi}_{i}(x)\hat{\psi}_{i}(y)=\hat{\psi}_{i}(y)\hat{\psi}_{i}(x)e^{2\pi
is\epsilon(x^{1}-y^{1})},~\text{for }(x-y)^{2}<0
\]
with \textquotedblleft spin\textquotedblright\ $s=\tfrac{1}{2}\left(
1-1/N\right)  $. This implies that the \textquotedblleft
physical\textquotedblright\ S-matrix is related to the one of (\ref{2.2}) by
\cite{KKS,KT1}%
\[
S_{\alpha\beta}^{\delta\gamma}(\theta_{12})=e^{2\pi is\epsilon(\theta_{12}%
)}\hat{S}_{\alpha\beta}^{\delta\gamma}(\theta_{12})\,.
\]
As a consequence the abnormal crossing relation (\ref{2.4}) transforms to a
normal one. The bound state S-matrix satisfies
\begin{equation}
S_{(\rho)\beta}^{\delta(\sigma)}(\theta)=(-1)^{N-1}\mathbf{C}_{(\rho)\gamma
}S_{\beta\alpha}^{\gamma\delta}(\pi i-\theta)\mathbf{C}^{\alpha(\sigma)}.
\label{2.38}%
\end{equation}
Therefore the physical crossing relation is%
\[
\hat{S}_{\bar{\alpha}\beta}^{\delta\bar{\gamma}}(\theta)=\mathbf{\hat{C}%
}_{\bar{\alpha}\alpha^{\prime}}\hat{S}_{\beta\gamma^{\prime}}^{\alpha^{\prime
}\delta}(\pi i-\theta)\mathbf{\hat{C}}^{\gamma^{\prime}\bar{\gamma}}%
\]
with $\mathbf{\hat{C}}_{\bar{\alpha}\alpha^{\prime}}=\delta_{\alpha
\alpha^{\prime}},\ \mathbf{\hat{C}}^{\gamma^{\prime}\bar{\gamma}}%
=\delta^{\gamma^{\prime}\gamma}$.

\section{Form factors}

\label{s3}For a state of $n$ particles of rank $r_{1},\dots,r_{n}$ with
rapidities $\underline{\theta}=(\theta_{1},\dots,\theta_{n})$ and a local
operator $\mathcal{O}(x)$ we define the associated form factor functions
$F_{\underline{\alpha}}^{\mathcal{O}}(\underline{\theta})$ by
\[
\langle\,0\,|\,\mathcal{O}(x)\,|\,\theta_{1},\dots,\theta_{n}\,\rangle
_{\underline{\alpha}}^{in}=e^{-ix(p_{1}+\cdots+p_{n})}F_{\underline{\alpha}%
}^{\mathcal{O}}(\underline{\theta})~,~~\text{for}~\theta_{1}>\dots>\theta
_{n}.
\]
where again $\underline{\alpha}=((\alpha_{11},\dots,\alpha_{1r_{1}}%
),\dots,(\alpha_{n1},\dots,\alpha_{nr_{n}}))$. For all other arrangements of
the rapidities the functions $F_{\underline{\alpha}}^{\mathcal{O}}%
(\underline{\theta})$ are given by analytic continuation. The co-vector valued
function $F_{\underline{\alpha}}^{\mathcal{O}}(\underline{\theta})$ satisfies
the form factor equations (i) -- (v) (see \cite{KW,Sm,BFKZ,BK,BFK1}) and can
be written as \cite{KW}
\begin{equation}
F_{\underline{\alpha}}^{\mathcal{O}}(\underline{\theta})=K_{\underline{\alpha
}}^{\mathcal{O}}(\underline{\theta})\prod_{1\leq i<j\leq n}F_{r_{i}r_{j}%
}(\theta_{ij}) \label{3.2}%
\end{equation}
where $F_{r_{i}r_{j}}(\theta)$ are the minimal form factor functions. For
particles of rank $1$ and $N-1$ they are given by $F_{11}(\theta
)=F_{N-1N-1}(\theta)=F(\theta)$ and $F_{N-11}(\theta)=F_{1N-1}(\theta
)=G(\theta)$ of (\ref{2.26}) and (\ref{2.28}), respectively. In \cite{BFK1}
the form factors of the fundamental particles of rank $1$ have been
constructed. We shortly recall the results. All other form factors can be
obtained from these by applying the bound state fusion procedure which is
given by the form factor equation (iv) (see e.g. \cite{BFK1}).

\paragraph{Form factors for particles of rank 1:}

The K-function in (\ref{3.2}) is given by the nested \textquotedblleft
off-shell\textquotedblright\ Bethe ansatz (\ref{2.14}) for the special case
$\bar{n}=0$ and a special choice of the scalar function $\,k(\underline
{\theta},\underline{z})$ such that the form factor equations (i) -- (v) are
satisfied%
\begin{equation}
\fbox{$\rule[-0.2in]{0in}{0.5in}\displaystyle~K_{\underline{\alpha}%
}^{\mathcal{O}}(\underline{\theta})=\frac{N_{n}}{m!}\int_{\mathcal{C}%
_{\underline{\theta}}}dz_{1}\cdots\int_{\mathcal{C}_{\underline{\theta}}%
}dz_{m}\,\tilde{h}(\underline{\theta},\underline{z})\,p^{\mathcal{O}%
}(\underline{\theta},\underline{z})\,\tilde{\Psi}_{\underline{\alpha}%
}(\underline{\theta},\underline{z})$~}\label{3.4}%
\end{equation}
with
\begin{align}
\tilde{h}(\underline{\theta},\underline{z}) &  =\prod_{i=1}^{n}\prod_{j=1}%
^{m}\tilde{\phi}(\theta_{i}-z_{j})\prod_{1\leq i<j\leq m}\tau(z_{i}%
-z_{j})\label{3.6}\\
\tau(z) &  =\frac{1}{\tilde{\phi}(z)\tilde{\phi}(-z)}\,.\nonumber
\end{align}
The integration contour $\mathcal{C}_{\underline{\theta}}$ is defined by
(\ref{2.16}). The dependence on the operator $\mathcal{O}$ enters only through
the p-function $p^{\mathcal{O}}(\underline{\theta},\underline{z})$ which has
to satisfy simple equations (see \cite{BK2,BK04,BFK,BFK1}). The K-function is
in general a linear combination of the \emph{fundamental building blocks}
\cite{BK2,BK04,BFK} given by (\ref{3.4}). Here we consider only these cases
where the sum consists only of one term.

\subparagraph{The p-function:}

The co-vector valued function $\tilde{\Psi}_{\underline{\alpha}}%
(\underline{\theta},\underline{z})$ is expressed as in (\ref{2.18}) for
$\bar{n}=0$ by an $L_{\underline{\beta}}(\underline{z})$ for which the nesting
procedure is applied. The final form is (up to a constant)
\begin{align}
F_{\underline{\alpha}}^{\mathcal{O}}(\underline{\theta})  &  =\prod
F(\theta_{ij})\int d\underline{z}^{(1)}\,\dots\int d\underline{z}%
^{(N-1)}\tilde{h}(\underline{\theta},\underline{\underline{z}}%
)\,p^{\mathcal{O}}(\underline{\theta},\underline{\underline{z}})\,\tilde{\Phi
}_{\underline{\alpha}}(\underline{\theta},\underline{\underline{z}%
})\label{3.8}\\
\tilde{h}(\underline{\theta},\underline{\underline{z}})  &  =\tilde
{h}(\underline{\theta},\underline{z}^{(1)})\cdots\tilde{h}(\underline
{z}^{(N-2)},\underline{z}^{(N-1)})\,.\nonumber
\end{align}
where $\underline{\underline{z}}=\underline{z}^{(1)},\dots,\underline
{z}^{(N-1)}$. The Bethe state $\tilde{\Phi}_{\underline{\alpha}}%
(\underline{\theta},\underline{\underline{z}})$ is obtained by the nesting
procedure (see (\ref{2.18}), (\ref{2.20}) and \cite{BFK1})%

\[
\tilde{\Phi}_{\underline{\alpha}}(\underline{\theta},\underline{\underline{z}%
})=\tilde{\Phi}_{\underline{\varsigma}}^{(N-1)}(\underline{z}^{(N-2)}%
,\underline{z}^{(N-1)})\dots\tilde{\Phi}_{\underline{\beta}}^{(2),\underline
{\gamma}}(\underline{z}^{(1)},\underline{z}^{(2)})\tilde{\Phi}_{\underline
{\alpha}}^{\underline{\beta}}(\underline{\theta},\underline{z}^{(1)})\,.
\]
In general the p-function (see \cite{BFK1}) depends on the rapidities
$\underline{\theta}$ and all integration variables $\underline{z}^{(l)}$. Let
the operator $\mathcal{O}(x)$ transform as a highest weight $SU(N)$
representation with highest weight vector
\[
w^{\mathcal{O}}=\left(  w_{1}^{\mathcal{O}},\dots,w_{N}^{\mathcal{O}}\right)
.
\]
Because of $SU(N)$ invariance the weight vector of the co-vector
$F_{\underline{\alpha}}^{\mathcal{O}}(\underline{\theta})$ is then%
\begin{align}
w  &  =\left(  w_{1}^{\mathcal{O}},\dots,w_{N}^{\mathcal{O}}\right)  +L\left(
1,\dots,1\right) \label{3.10}\\
&  =\left(  n-n_{1},n_{1}-n_{2},\dots,n_{N-2}-n_{N-1},n_{N-1}\right) \nonumber
\end{align}
where (\ref{2.22}) and the fact, that the weight vector $\left(
1,\dots,1\right)  $ correspond to the vacuum sector, has been used. In
\cite{BFK1} was shown that the p-function has to satisfy a set of equations in
order that the form factor (\ref{3.8}) satisfies the form factor equations. In
particular to guarantee the transformation properties of the operator the
following periodicity relations have to be valid
\begin{equation}%
\begin{array}
[c]{ccc}%
p^{\mathcal{O}}(\underline{\theta},\dots,\underline{z}^{(l)},\dots) & = &
\tilde{\sigma}_{1}^{\mathcal{O}}p^{\mathcal{O}}(\dots,\theta_{i}+2\pi
i,\dots,\underline{z}^{(l)},\dots)\\
& = & (-1)^{w_{l}^{\mathcal{O}}+w_{l+1}^{\mathcal{O}}}p^{\mathcal{O}%
}(\underline{\theta},\dots,z_{i}^{(l)}+2\pi i,\dots)
\end{array}
\label{3.12}%
\end{equation}
where%
\begin{align}
\tilde{\sigma}_{1}^{\mathcal{O}}  &  =\sigma_{1}^{\mathcal{O}}%
(-1)^{(N-1)\left[  \sum_{i=1}^{N}w_{i}^{\mathcal{O}}/N\right]  -\sum_{i=2}%
^{N}w_{i}^{\mathcal{O}}}\nonumber\\
\sigma_{1}^{\mathcal{O}}  &  =e^{i\pi(1-1/N)Q^{\mathcal{O}}}\,. \label{3.14}%
\end{align}
The charge of the operator $\mathcal{O}$ is defined by $Q^{\mathcal{O}%
}=n\operatorname{mod}N$ and $\sigma_{1}^{\mathcal{O}}$ is the statistics
factor of $\mathcal{O}$ with respect to the fundamental particle of rank $1$.
The sign factors $\tilde{\sigma}_{1}^{\mathcal{O}}/\sigma_{1}^{\mathcal{O}%
}=\pm1$ and $(-1)^{w_{l}^{\mathcal{O}}+w_{l+1}^{\mathcal{O}}}=\pm1$ in
(\ref{3.12}) follow \cite{BFK1} from the sign $(-1)^{(N-1)}$ in the unusual
crossing relation (\ref{2.38}) (related to the Klein factors of (\ref{2.36})).

\subsection{General form factors of particles of rank $1$ and $N-1$:}

Using the bound state procedure (see appendix \ref{sb}) which means taking
residues of (\ref{3.2}) or (\ref{3.4}) one derives the form factors and
K-functions for $n$ particles of rank $1$ with rapidities $\underline{\theta}$
and $\bar{n}$ particles of rank $N-1$ with rapidities $\underline{\omega}$. As
usual we split off the minimal part
\begin{equation}
F_{\underline{\alpha}\underline{(\rho)}}^{\mathcal{O}}(\underline{\theta
},\underline{\omega})=K_{\underline{\alpha}\underline{(\rho)}}(\underline
{\theta},\underline{\omega})\prod_{1\leq i<j\leq n}F(\theta_{ij})\prod
_{i=1}^{n}\prod_{j=1}^{\bar{n}}G(\theta_{i}-\omega_{j})\prod_{1\leq
i<j\leq\bar{n}}F(\omega_{ij})\,. \label{3.16}%
\end{equation}
The K-function is given by a nested `off-shell' Bethe ansatz (\ref{2.14})
\begin{equation}
\fbox{$\rule[-0.2in]{0in}{0.5in}\displaystyle~K_{\underline{\alpha}%
\underline{(\rho)}}^{\mathcal{O}}(\underline{\theta},\underline{\omega}%
)=\frac{N_{n\bar{n}}}{m!}\int_{\mathcal{C}_{\underline{\theta}}\underline
{\omega}}dz_{1}\cdots\int_{\mathcal{C}_{\underline{\theta}\underline{\omega}}%
}dz_{m}\,\tilde{h}(\underline{\theta},\underline{z})\,p^{\mathcal{O}%
}(\underline{\theta},\underline{\omega},\underline{z})\,\tilde{\Psi
}_{\underline{\alpha}\underline{(\rho)}}(\underline{\theta},\underline{\omega
},\underline{z})$~} \label{3.18}%
\end{equation}
where $\tilde{h}(\underline{\theta},\underline{z})$ is the scalar function
(\ref{3.6}). Note that this h-function does not depend on $\underline{\omega}%
$. For the $SU(N)$ S-matrix the function $\tilde{\phi}(\theta)$ is given by
(\ref{2.30}). The integration contour $\mathcal{C}_{\underline{\theta
}\underline{\omega}}$ (see Fig. \ref{f2}) has been defined in the context of
(\ref{2.14}). \begin{figure}[tbh]%
\[
\unitlength4mm\begin{picture}(27,13)
\thicklines
\put(1,0){
\put(0,0){$\bullet$}
\put(-.3,1){$\omega-i\pi-2\pi i\frac1N$}
\put(0,6){$\bullet$}
\put(-.3,5){$\omega+i\pi-2\pi i\frac1N$}
\put(0,7.2){$\bullet$}
\put(-.3,8.2){$\omega +i\pi$}
\put(0,12.2){$\bullet$}
\put(-.3,11.4){$\omega+3\pi$}
}
\put(12,0){
\put(0,0){$\bullet~\theta_2-2\pi i$}
\put(0,4.8){$\bullet$}
\put(.19,3.8){$\theta_2-2\pi i\frac1N$}
\put(0,6){$\bullet~\theta_2$}
\put(0,11){$\bullet\,\theta_2+2\pi i(1-\frac1N)$}
}
\put(20,1){
\put(0,0){$\bullet~\theta_1-2\pi i$}
\put(0,4.8){$\bullet$}
\put(.19,3.8){$\theta_1-2\pi i\frac1N$}
\put(0,6){$\bullet~\theta_1$}
\put(0,11){$\bullet\,\theta_1+2\pi i(1-\frac1N)$}
}
\put(10,3){\vector(1,0){0}}
\put(15,5.6){\vector(-1,0){0}}
\put(16,8.6){\vector(1,0){0}}
\put(-1,4.5){\oval(18,4.7)[tr]}
\put(24,4.5){\oval(32,3)[bl]}
\put(24,4.8){\oval(3,3.6)[r]}
\put(24,6.1){\oval(14,1)[tl]}
\put(12,6.1){\oval(10,1)[br]}
\put(12,7.1){\oval(4,3)[l]}
\put(12,8.6){\line(1,0){14}}
\end{picture}
\]
\caption{The integration contour $\mathcal{C}_{\theta_{1}\theta_{2}\omega}$
for two particles and one bound state.}%
\label{f2}%
\end{figure}
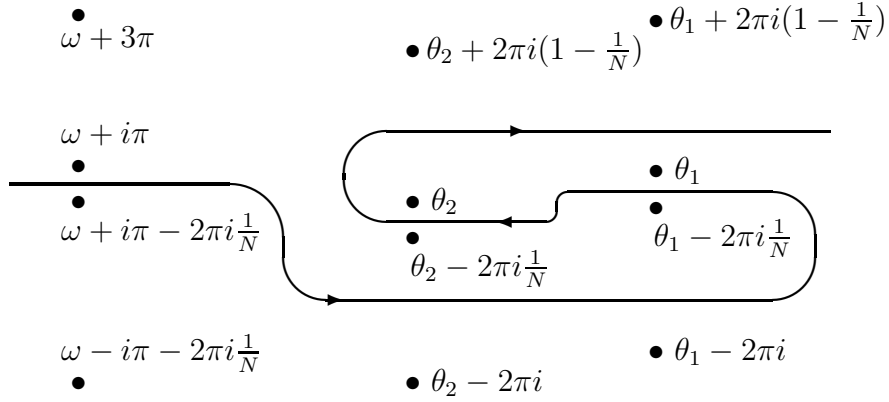

\subparagraph{Nesting:}

The state $\tilde{\Psi}_{\underline{\alpha}\underline{(\rho)}}$ in
(\ref{3.18}) is a linear combination of the basic Bethe ansatz co-vectors
(\ref{2.20})%
\[
\tilde{\Psi}_{\underline{\alpha}\underline{(\rho)}}(\underline{\theta
},\underline{\omega},\underline{z})=L_{\underline{\beta}\underline{(\sigma)}%
}(\underline{z},\underline{\omega})\tilde{\Phi}_{\underline{\alpha}%
\underline{(\rho)}}^{\underline{\beta}\underline{(\sigma)}}(\underline{\theta
},\underline{\omega},\underline{z})\,,~~\text{with }1<\beta_{i},~\sigma
_{1j}=1
\]
where $L_{\underline{\beta}(\sigma)}(\underline{z},\underline{\omega})$
satisfies again a representation like (\ref{3.18}). This nesting is iterated
until all $\beta_{i}=N$ and all $(\sigma)_{i}=(1,2,\dots,N-1)$. Only for the
highest level Bethe ansatz the h-function depends on $\underline{\omega}$. The
final result is%
\begin{align}
K_{\underline{\alpha}(\rho)}^{\mathcal{O}}(\underline{\theta},\omega)  &
=\int d\underline{z}^{(1)}\dots\int d\underline{z}^{(N-1)}\tilde
{h}\,(\underline{\theta},\underline{\omega},\underline{\underline{z}%
})p^{\mathcal{O}}(\underline{\theta},\underline{\omega},\underline
{\underline{z}})\tilde{\Phi}_{\underline{\alpha}(\rho)}(\underline{\theta
},\underline{\omega},\underline{\underline{z}})\,\label{3.20}\\
\tilde{h}\,(\underline{\theta},\underline{\omega},\underline{\underline{z}})
&  =\prod_{l=0}^{N-2}\tilde{h}(\underline{z}^{(l)},\underline{z}^{(l+1)}%
)\prod_{i=1}^{\bar{n}}\prod_{j=1}^{n_{N-1}}\tilde{\chi}(\omega_{i}%
-z_{j}^{(N-1)})\nonumber\\
\tilde{\chi}(\omega)  &  =\Gamma\left(  \frac{1}{2}+\frac{\omega}{2\pi
i}\right)  \Gamma\left(  \frac{1}{2}-\frac{1}{N}-\frac{\omega}{2\pi i}\right)
\,.\nonumber
\end{align}
The complete Bethe ansatz state is%
\[
\tilde{\Phi}_{\underline{\alpha}\underline{(\rho)}}(\underline{\theta
},\underline{\omega},\underline{\underline{z}})=\tilde{\Phi}_{\underline
{\varsigma}\underline{(\lambda)}}^{(N-2)\underline{(\eta)}}(\underline
{z}^{(N-2)},\underline{\omega},\underline{z}^{(N-1)})\dots\tilde{\Phi
}_{\underline{\beta}\underline{(\sigma)}}^{(1)\underline{\gamma}%
\underline{(\kappa)}}(\underline{z}^{(1)},\underline{\omega},\underline
{z}^{(2)})\tilde{\Phi}_{\underline{\alpha}\underline{(\rho)}}^{\underline
{\beta}\underline{(\sigma)}}(\underline{\theta},\underline{\omega}%
,\underline{z}^{(1)})
\]
where $\underline{(\eta)}$ denotes $\bar{n}$ highest weight bound states
$(\eta_{i1},\dots,\eta_{iN-1})=(1,2,\dots,N-1)$. The p-functions in
(\ref{3.18}) and (\ref{3.20}) are obtained again by the bound state procedure
from a solution of (\ref{3.12}) for $\bar{n}=0$. In particular for $\bar{n}=1$
(with the replacements in (\ref{3.12}) $\underline{\theta}\rightarrow
\underline{\theta},\underline{\varphi}$ and $\underline{z}^{(l)}%
\rightarrow\underline{z}^{(l)},\underline{y}^{(l)}$ where $\underline{\varphi
}=\left(  \varphi_{1},\dots,\varphi_{N-1}\right)  ,~\underline{y}%
^{(l)}=\left(  y_{1},\dots,y_{N-1-l}\right)  ,~l=1,\dots,N-2$)
\[
p^{\mathcal{O}}(\underline{\theta},\omega,\underline{\underline{z}%
})=p^{\mathcal{O}}(\underline{\theta},\underline{\varphi},\underline{z}%
^{(1)},\underline{y}^{(1)},\dots,\underline{z}^{(N-1)},\underline{y}%
^{(N-1)})\,.
\]
Here $y_{i}^{(l)}=\varphi_{i}^{(l)},~i=1,\dots,N-1-l$ and $\varphi_{k}%
=\omega+ki\eta-i\pi.$ The proofs of the statements of this subsection and more
details can be found in appendix \ref{sb}.

\subsection{Examples}

To illustrate our general results we present some simple examples. In
addition, we also derive the $1/N$ expansion of exact form factors for the
purpose of later comparison with the $1/N$-perturbation theory of the chiral
$SU(N)$ Gross-Neveu model.

\paragraph{The energy momentum tensor:}

For the local operator $\mathcal{O}(x)=T^{\rho\sigma}(x)$ (where $\rho
,\sigma=\pm$ denote the light cone components) the p-function for $n$
particles of rank 1 (as for the sine-Gordon model in \cite{BK})
\begin{equation}
p^{T^{\rho\sigma}}(\underline{\theta},\underline{z})=\sum\limits_{i=1}%
^{n}e^{\rho\theta_{i}}\sum\limits_{i=1}^{m}e^{\sigma z_{i}} \label{3.22}%
\end{equation}
satisfies the equations (\ref{3.12}) with $w^{T}=\left(  0,0,\dots,0\right)
$. For the $n=N$ particle form factor the weight vector is $w=\left(
1,1,\dots,1,1\right)  $. Due to (\ref{2.22}) there are $n_{l}=N-l$
integrations in the $l$-th level of the off-shell Bethe ansatz.

We calculate the form factor of the particle $\alpha$ and the bound state
$(\lambda)=(\lambda_{1},\dots,\lambda_{N-1})$ of $N-1$ particles. We apply the
bound state formulae (\ref{3.16}) and (\ref{3.18}) for $n=\bar{n}=1$. Due to
(\ref{2.24}) there is just one integration in every level of the nested Bethe
ansatz ($l=1,\dots,N-1$)%

\begin{align}
F_{\alpha(\lambda)}^{T^{\rho\sigma}}(\theta,\omega)  &  =K_{\alpha(\lambda
)}^{T^{\rho\sigma}}(\theta,\omega)\,G(\theta-\omega)\nonumber\\
K_{\alpha(\lambda)}^{T^{\rho\sigma}}(\theta,\omega)  &  =N_{2}^{T^{\rho\sigma
}}\left(  e^{\rho\theta}+e^{\rho\omega}\right)  \int_{\mathcal{C}%
_{\underline{\theta}}}dz\tilde{\phi}(\theta-z)e^{\sigma z}\tilde{\Psi}%
_{\alpha(\lambda)}(\theta,\omega,z)\label{3.24}\\
\tilde{\Psi}_{\alpha(\lambda)}(\theta,\omega,z)  &  =L_{\beta(\mu)}%
^{(1)}(z,\omega)\tilde{\Phi}_{\alpha(\lambda)}^{\beta(\mu)}(\theta
,\omega,z)\nonumber
\end{align}
where $G(\theta)$ defined in (\ref{2.28}) is the minimal form factor function
of two particles of rank $1$ and $N-1$. The integration in every level of the
nested Bethe ansatz ($l=N-2,\dots,1$) can be solved iteratively
\begin{align}
L_{\beta(\mu)}^{(l)}(z,\omega)  &  =\epsilon_{\beta(\mu)}L^{(l)}%
(\omega-z)~~\text{with }\beta>l,~(\mu)=(1,2,\dots,l,\ast,\dots,\ast
)\nonumber\\
L^{(l)}(\omega-z)  &  =c_{l}\ \Gamma\left(  \frac{1}{2}+\frac{\omega-z}{2\pi
i}\right)  \Gamma\left(  -\frac{1}{2}+\frac{l}{N}-\frac{\omega-z}{2\pi
i}\right)  \,. \label{3.26}%
\end{align}
The remaining integral in(\ref{3.24}) may be performed (see appendix
\ref{sb}\ ) with the result\footnote{In \cite{BKZ1,BKZ2} this result has been
obtained using Jackson type integrals.}%
\begin{equation}
\langle\,0\,|\,T^{\rho\sigma}(0)\,|\,\theta,\omega\,\rangle_{\alpha(\lambda
)}^{in}=4m_{1}^{2}\epsilon_{\alpha(\lambda)}e^{\frac{1}{2}(\rho+\sigma)\left(
\theta+\omega+i\pi\right)  }\frac{\sinh\tfrac{1}{2}\left(  \theta-\omega
-i\pi\right)  }{\theta-\omega-i\pi}G(\theta-\omega)\,. \label{3.28}%
\end{equation}
Similar as in \cite{BK} one can prove the eigenvalue equation
\[
\left(  \int dxT^{\pm0}(x)-\sum_{i=1}^{n}p_{i}^{\pm}\right)  |\,\theta
_{1},\dots,\theta_{n}\rangle_{\underline{\alpha}}^{in}=0
\]
for arbitrary states.

\paragraph{The fields $\psi_{\alpha}(x)$:}

Because the Bethe ansatz yields highest weight states we obtain the matrix
elements of the spinor field $\psi(x)=\psi_{1}(x)$. The p-function for the
local operator $\psi^{(\pm)}(x)$ for $n$ particles of rank 1 (see also
\cite{BFKZ})%
\[
p^{\psi^{(\pm)}}(\underline{\theta},\underline{z})=\exp\pm\frac{1}{2}\left(
\sum\limits_{i=1}^{m}z_{i}-\left(  1-\frac{1}{N}\right)  \sum\limits_{i=1}%
^{n}\theta_{i}\right)  \,.
\]
satisfies the equations (\ref{3.12}) with $w^{\psi}=\left(  1,0,\dots
,0\right)  $. For example the 1-particle form factor is
\[
\langle\,0\,|\,\psi^{(\pm)}(0)\,|\,\theta\,\rangle_{\alpha}=\delta_{\alpha
1}\,e^{\mp\frac{1}{2}\left(  1-\frac{1}{N}\right)  \theta}\,.
\]
The last formula is consistent with the proposal of Swieca et
al.~\cite{KuS,KKS} that the statistics of the fundamental particles in the
chiral $SU(N)$ Gross-Neveu model should be $\sigma=\exp\left(  2\pi is\right)
$, where $s=\frac{1}{2}\left(  1-\frac{1}{N}\right)  $ is the spin (see also
(\ref{3.14})). For the $n=N+1$ particle form factor there are again
$n_{l}=N-l$ integrations in the $l$-th level of the off-shell Bethe ansatz and
the $SU(N)$ weights are $w=\left(  2,1,\dots,1,1\right)  $. Due to
(\ref{2.24}) there is again just one integration in every level of the nested
Bethe ansatz. Similar as above one obtains the two-particle and one-bound
state form factor%
\begin{align}
F_{\alpha\beta(\lambda)}^{\psi^{(\pm)}}(\underline{\theta})  &  =K_{\alpha
\beta(\lambda)}^{\psi^{(\pm)}}(\underline{\theta})F(\theta_{12})G(\theta
_{13})G(\theta_{23})\nonumber\\
K_{\alpha\beta(\lambda)}^{\psi^{(\pm)}}(\underline{\theta})  &  =N^{\psi
}e^{\mp\frac{1}{2}\left(  \left(  1-\frac{1}{N}\right)  \left(  \theta
_{1}+\theta_{2}\right)  +\frac{1}{N}\theta_{3}\right)  }\int_{\mathcal{C}%
_{\underline{\theta}}}dz\tilde{\phi}(\theta_{1}-z)\tilde{\phi}(\theta
_{2}-z)e^{\pm\frac{1}{2}z}\tilde{\Psi}_{\alpha\beta(\lambda)}(\underline
{\theta},z)\label{3.30}\\
\tilde{\Psi}_{\alpha\beta(\lambda)}(\underline{\theta},z)  &  =L_{\gamma(\mu
)}(z,\theta_{3})\tilde{\Phi}_{\alpha\beta(\lambda)}^{\gamma(\mu)}%
(\underline{\theta},z)\,,~~\text{with }1<\gamma,~\lambda_{1}=1\nonumber
\end{align}
where the function $L_{\gamma(\mu)}(z,\theta_{3})=\epsilon_{\gamma(\mu
)}L^{(1)}(\theta_{3}-z)$ is the same as in (\ref{3.26}) above. We were not
able to perform this integration, however, the result can be expressed in
terms of Meijer's G-functions%
\begin{align*}
K_{\alpha\beta(\lambda)}^{\psi_{\delta}^{(\pm)}}(\underline{\theta})  &
=\epsilon_{\alpha(\lambda)}\delta_{\beta}^{\delta}K_{1}^{\psi^{(\pm)}%
}(\underline{\theta})+\epsilon_{\beta(\lambda)}\delta_{\alpha}^{\delta}%
K_{2}^{\psi^{(\pm)}}(\underline{\theta})\\
K_{1}^{\psi^{(\pm)}}(\underline{\theta})  &  =N_{1}^{\psi}e^{\mp\frac{1}%
{2}\left(  \left(  1-\frac{1}{N}\right)  \left(  \theta_{1}+\theta_{2}\right)
+\frac{1}{N}\theta_{3}\right)  }\,G_{33}^{33}\left(  e^{\pm i\pi}\left\vert
\begin{array}
[c]{c}%
\frac{\theta_{1}}{2\pi i}+1,\frac{\theta_{2}}{2\pi i}+1,\frac{\theta_{3}}{2\pi
i}+\frac{3}{2}-\frac{1}{N}\\
\frac{\theta_{1}}{2\pi i}-\frac{1}{N},\frac{\theta_{2}}{2\pi i}-\frac{1}%
{N}+1,\frac{\theta_{3}}{2\pi i}+\frac{1}{2}%
\end{array}
\right.  \right)
\end{align*}
and $K_{2}^{\psi^{(\pm)}}$ is obtained by the form factor equation (i).

\subparagraph{1/N expansion of the exact form factor:}

We consider the connected part of the matrix element%
\[
^{\gamma}\langle\,\theta_{3}\,|\,\psi_{\delta}^{(\pm)}(x)\,|\,\theta
_{1},\theta_{2}\,\rangle_{\alpha\beta}^{in,conn.}=\mathbf{C}^{(\lambda)\gamma
}F_{\alpha\beta(\lambda)}^{\psi^{(\pm)}}(\theta_{1},\theta_{2},\theta_{3}%
-i\pi)\,.
\]
Instead of the field $\psi$ we consider the operator $\mathcal{O}_{\delta
}=\left(  -i\left(  i\gamma\partial-m\right)  \psi\right)  _{\delta}$%
\[
F_{\alpha\beta}^{\mathcal{O}_{\delta},\gamma}=F_{(1)}^{\mathcal{O}}%
\delta_{\alpha}^{\gamma}\delta_{\beta}^{\delta}-F_{(2)}^{\mathcal{O}}%
\delta_{\beta}^{\gamma}\delta_{\alpha}^{\delta}~,~~F_{(2)}^{\mathcal{O}%
}(\theta_{1},\theta_{2},\theta_{3})=F_{(1)}^{\mathcal{O}}(\theta_{2}%
,\theta_{1},\theta_{3})\,.
\]
For $N\rightarrow\infty$ we expand the minimal form factors%
\[
F\left(  \theta\right)  =\frac{-i}{\pi}\sinh\tfrac{1}{2}\theta
+O(1/N)~,~~G(\theta)=1+O(1/N)\,,
\]
perform the integration in (\ref{3.30}) and obtain (after a lengthy
calculation)%
\begin{equation}
F_{(1)}^{\mathcal{O}}=-\frac{2mi\pi}{N}\frac{\sinh\theta_{13}}{\theta_{13}%
}\left(  \frac{1}{\cosh\frac{1}{2}\theta_{13}}-\gamma^{5}\frac{1}{\sinh
\frac{1}{2}\theta_{13}}\right)  u(\theta_{2})+O(N^{-2})\,. \label{3.32}%
\end{equation}
We use the following conventions for the $\gamma$-matrices
\[
\gamma^{0}=\left(
\begin{array}
[c]{cc}%
0 & 1\\
1 & 0
\end{array}
\right)  ~,\quad\gamma^{1}=\left(
\begin{array}
[c]{cc}%
0 & 1\\
-1 & 0
\end{array}
\right)  ~,\quad\gamma^{5}=\gamma^{0}\gamma^{1}=\left(
\begin{array}
[c]{cc}%
-1 & 0\\
0 & 1
\end{array}
\right)
\]
and for the spinors%
\begin{equation}
u(p)=\sqrt{m}\left(
\begin{array}
[c]{c}%
e^{-\theta/2}\\
e^{\theta/2}%
\end{array}
\right)  ~,\quad v(p)=\sqrt{m}\,i\left(
\begin{array}
[c]{c}%
e^{-\theta/2}\\
-e^{\theta/2}%
\end{array}
\right)  \,. \label{3.34}%
\end{equation}
In section \ref{s4} below we compare this result with the $1/N$-expansion of
the chiral $SU(N)$ Gross-Neveu model in terms of Feynman graphs.

\paragraph{The current $J_{\alpha\beta}^{\mu}(x)$:}

The $SU(N)$ current $J_{\alpha(\rho)}^{\mu}(x)$ transforms as the adjoint
representation with the weight vector $w^{J}=(2,1,\dots,1,0)$. Again, because
the Bethe ansatz yields highest weight states we obtain the matrix elements of
the highest weight component
\[
J_{\alpha(\rho)}^{\mu}=\delta_{\alpha1}\epsilon_{(\rho)N}\epsilon^{\mu\nu
}\partial_{\nu}\varphi
\]
where we have introduced the pseudo-potential $\varphi(x)$. We start from%
\begin{align*}
F_{\underline{\alpha}}^{\varphi}(\underline{\theta})  &  =K_{\underline
{\alpha}}^{\varphi}(\underline{\theta})\prod F(\theta_{ij})\\
K_{\underline{\alpha}}^{\varphi}(\underline{\theta})  &  =\int d\underline
{z}^{(1)}\,\dots\int d\underline{z}^{(N-1)}h(\underline{\theta},\underline
{\underline{z}})\,p^{\varphi}(\underline{\theta},\underline{\underline{z}%
})\,\,\Psi_{\underline{\alpha}}(\underline{\theta},\underline{\underline{z}})
\end{align*}
with $\underline{\underline{z}}=\underline{z}^{(1)},\dots,\underline
{z}^{(N-1)}$. The proposal for the p-function for $n$ particles of rank 1 (see
also \cite{BFKZ})%
\[
p^{\varphi}(\underline{\theta},\underline{\underline{z}})=N^{\varphi}\left(
\sum_{i=1}^{n}\exp\theta_{i}\right)  ^{-1}\exp\frac{1}{2}\left(  \sum
_{i=1}^{n}\theta_{i}-\sum_{i=1}^{n_{1}}z_{i}^{(1)}-\sum_{i=1}^{n_{N-1}}%
z_{i}^{(N-1)}\right)
\]
satisfies the equations (\ref{3.12}) with $w^{\varphi}=w^{J}$.

We calculate the form factor of the particle $\alpha$ and the bound state
$(\lambda)=(\lambda_{1},\dots,\lambda_{N-1})$ of $N-1$ particles with weight
vector $w=\left(  2,1,\dots,1,0\right)  $. We apply the bound state formulae
(\ref{3.16}) and (\ref{3.18}) for $n=\bar{n}=1$. Due to (\ref{2.24}) there is
no integration in each level of the nested Bethe ansatz ($l=1,\dots,N-1$) and
\begin{align}
F_{\alpha(\lambda)}^{\varphi}(\theta,\omega)  &  =K_{\alpha(\lambda)}%
^{\varphi}(\theta,\omega)\,G(\theta-\omega)\nonumber\\
K_{\alpha(\lambda)}^{\varphi}(\theta,\omega)  &  =N_{2}^{\varphi}%
\delta_{\alpha1}\epsilon_{(\lambda)N}\frac{e^{\frac{1}{2}(\theta+\omega)}%
}{e^{\theta}+e^{\omega}}\nonumber
\end{align}
The form factor for the $SU(N)$ current is therefore%
\begin{align}
F_{\alpha(\lambda)}^{J_{\beta(\rho)}^{\pm}}(\theta,\omega)  &  =\langle
\,0\,|\,J_{\beta(\rho)}^{\pm}(0)\,|\,\theta,\omega\,\rangle_{\alpha(\lambda
)}^{in}=\pm N_{2}\delta_{\alpha}^{\beta}\delta_{(\lambda)}^{(\rho)}\left(
e^{\pm\theta}+e^{\pm\omega}\right)  \frac{e^{\frac{1}{2}(\theta+\omega)}%
}{e^{\theta}+e^{\omega}}G(\theta-\omega)\nonumber\\
\,  &  =\delta_{\alpha}^{\beta}\delta_{(\lambda)}^{(\rho)}\,\bar{v}%
(\omega)\gamma^{\pm}u(\theta)\,G(\theta-\omega)/G(i\pi)\,. \label{3.36}%
\end{align}

Also here we calculate the$\ 1/N$-expansion of the exact form factor for later
comparison with the $1/N$-perturbation theory of the chiral $SU(N)$
Gross-Neveu model. Using the expansion of the minimal form factor function%
\[
G(\theta)=c^{\prime}\left(  1-\frac{1}{N}\left(  1-\frac{1}{2}\frac
{i\pi-\theta}{\tanh\frac{1}{2}\theta}\right)  \right)  +O(N^{-2})
\]
we obtain the $1/N$ expansion of the exact the $SU(N)$ current form factor as
\begin{align*}
F_{\alpha(\lambda)}^{J_{\beta(\rho)}^{\pm}}(\theta,\omega)  &  =\langle
\,0\,|\,J_{\beta(\rho)}^{\pm}(0)\,|\,\theta,\omega\,\rangle_{\alpha(\lambda
)}^{in}\,=\delta_{\alpha}^{\beta}\delta_{(\lambda)}^{(\rho)}\,\bar{v}%
(\omega)\gamma^{\pm}u(\theta)\,G(\theta-\omega)/G(i\pi)\\
&  =\delta_{\alpha}^{\beta}\delta_{(\lambda)}^{(\rho)}\,\bar{v}(\omega
)\gamma^{\pm}u(\theta)\left(  1-\frac{1}{N}\left(  1-\frac{1}{2}\frac
{i\pi-(\theta-\omega)}{\tanh\frac{1}{2}\left(  i\pi-(\theta-\omega)\right)
}\right)  \right)  +O(N^{-2})\,.
\end{align*}

\section{The chiral $SU(N)$ Gross-Neveu model}

\label{s4}Let the fermi fields $\psi_{\alpha}(x),~(\alpha=1,\dots,N)$ form an
$SU(N)$-multiplet. The field theory is defined by the Lagrangian \cite{GN}
\[
\mathcal{L}(\psi,\bar{\psi})=\bar{\psi}\,i\gamma\partial\,\psi+\tfrac{1}%
{2}g^{2}\left(  (\bar{\psi}\psi)^{2}-(\bar{\psi}\gamma^{5}\psi)^{2}\right)
\]
or equivalently%
\[
\mathcal{L}(\psi,\bar{\psi},\sigma,\pi)=\bar{\psi}(i\gamma\partial
-\sigma-i\gamma^{5}\pi)\psi-\tfrac{1}{2}g^{-2}(\sigma^{2}+\pi^{2})
\]
where $\sigma(x)$ is scalar and $\pi(x)$ a pseudoscalar field. The field
equations for these fields are%
\[
\sigma=-g^{2}\bar{\psi}\psi,~\pi=-ig^{2}\bar{\psi}\gamma^{5}\psi\,.
\]

\subsection{The 1/N perturbation theory}

Using the bootstrap program and the results of \cite{BKKW}, the S-matrix i.e.
the on-shell solution of the model has been proposed in \cite{ABW,KKS}. It is
well known \cite{GN,Wi,ABW,KKS} that the naive $1/N$-expansion of the chiral
Gross-Neveu model suffers on severe infrared problems. In \cite{ABW,KKS} two
different approaches to overcome these problems were proposed and it was shown
that the exact S-matrix was consistent with both. We will show that an
off-shell quantity as our solution for the three particle form factor of the
field $\psi(x)$ is also consistent with the $1/N$-expansion of \cite{KKS}.
Without presenting details we note that we do not obtain consistency with the
approach of \cite{ABW}. Since in the literature (see e.g. \cite{KKS,KKSe,AAR})
there are some errors and misprints we present a detailed derivation of the
approach of Swieca et al.

The generation functional of Greens's functions for the chiral Gross-Neveu
model is
\begin{equation}
Z(\xi,\bar{\xi})=\int d\psi\,d\bar{\psi}\,d\sigma\,d\pi\,\exp i\left(
\mathcal{A}(\psi,\bar{\psi},\sigma,\pi)+\bar{\xi}\psi+\bar{\psi}\xi\right)
\label{4.2}%
\end{equation}
with the action $\mathcal{A}(\psi,\bar{\psi},\sigma,\pi)=\int d^{2}%
x\,\mathcal{L}(\psi,\bar{\psi},\sigma,\pi)$. In eq.~(\ref{4.2}) and in the
following we use a short notation of the $x$-integrations e.g.~$\bar{\xi}%
\psi=\int d^{2}x\bar{\xi}(x)\psi(x)$.

When quantizing the model, severe infrared divergences appear due to the
"would-be Goldstone boson" $\pi$. Following Kurak, K\"{o}berle and Swieca
\cite{KKS} we introduce two additional bosonic fields $A(x)$ and $B(x)$
quantized with negative norm. The $A$-field compensates the infrared
divergences. In fact as we will see below that together with the infrared
divergences of $\pi$ it decouples from the rest of the model. We replace the
fermi fields by
\[
\psi(x)\rightarrow\psi^{\prime}(x)=\exp i\left(  \gamma^{5}A(x)+B(x)\right)
\,\psi(x).
\]
The $B$-field is introduced, in order not to change the statistics of the
$\psi$-fields. Finally we have the Lagrangian
\begin{align*}
\mathcal{L}  &  =\bar{\psi}^{\prime}(i\gamma\partial-\mu)\psi^{\prime}%
-\tfrac{1}{2}g^{-2}(\sigma^{2}+\pi^{2})+\tfrac{1}{2}N(\alpha^{-2}A\square
A+\beta^{-2}B\square B)\\
&  \mathrm{with}\quad\mu=\sigma+i\gamma^{5}\pi-\gamma^{5}\gamma\partial
A+\gamma\partial B\,.
\end{align*}
The couplings $\alpha$ and $\beta$ are unrenormalized, their renormalized
values are $\sqrt{\pi}$. Performing the $\psi^{\prime}$-integrations in the
generation functional we obtain
\[
Z(\xi,\bar{\xi})=\int d\sigma\,d\pi\,dA\,dB\,\exp\left(  i\mathcal{A}%
_{eff}(\sigma,\pi,A,B)-\bar{\xi}S\xi\right)
\]
with the fermi propagator $S=i(i\gamma\partial-\mu)^{-1}$ and the effective
action
\begin{multline*}
\mathcal{A}_{eff}(\sigma,\pi,A,B)\\
=-iN\operatorname*{Tr}\ln(i\gamma\partial-\mu)-\tfrac{1}{2}\int d^{2}%
x\,\left(  g^{-2}\left(  \sigma^{2}+\pi^{2}\right)  -N(\alpha^{-2}A\square
A+\beta^{-2}B\square B)\right)  \,.
\end{multline*}
The symbol $\operatorname*{Tr}$ means the trace with respect to $x$-space and
spinor space. The trace with respect to $SU(N)$-isospin has been taken and
gives the factor $N$. We define the vertex functions $\Gamma$ by
\[
\mathcal{A}_{eff}(\varphi)=\sum_{n=0}^{\infty}\frac{1}{n!}\int d^{2}x_{1}\dots
d^{2}x_{n}\,\Gamma_{\underline{\varphi}}^{(n)}(x_{1},\dots,x_{n})\,\varphi
_{1}(x_{1})\dots\varphi_{n}(x_{n})\,,
\]
where $\varphi_{i}\in\{\sigma-\sigma_{0},\pi-\pi_{0},A-A_{0},B-B_{0}\}$. The
values $\sigma_{0}$ etc.~are defined by the condition that $\mathcal{A}%
_{eff}(\varphi)$ is stationary at this point. This means that the one-point
vertex functions $\Gamma_{\sigma}^{(1)}(x)=\delta\mathcal{A}_{eff}%
/\delta\sigma=0$ etc.~vanish%
\begin{align*}
\Gamma_{\sigma}^{(1)}(x)  &  =N\,\operatorname*{tr}\,S(x,x)-g^{-2}\sigma
_{0}=0\\
\Gamma_{\pi}^{(1)}(x)  &  =N\,\operatorname*{tr}\left(  i\gamma^{5}%
S(x,x)\right)  -g^{-2}\pi_{0}=0\\
\Gamma_{A}^{(1)}(x)  &  =N\,\operatorname*{tr}\left(  -\gamma^{5}%
\gamma\partial S(x,x)\right)  +N\alpha^{-2}\square A_{0}=0\\
\Gamma_{B}^{(1)}(x)  &  =N\,\operatorname*{tr}\left(  \gamma\partial
S(x,x)\right)  +N\beta^{-2}\square B_{0}=0.
\end{align*}
The three last equations mean $\pi_{0}=A_{0}=B_{0}=0$ and the first one
implies $\sigma_{0}=m$ with
\[
N\int\frac{d^{2}p}{(2\pi)^{2}}\,\operatorname*{tr}\,\frac{i}{\gamma
p-m}-g^{-2}\sigma_{0}=0\Rightarrow\sigma_{0}=m=Me^{-\frac{\pi}{Ng^{2}}}\,,
\]
where $M$ is an UV-cutoff. There is the effect of mass generation and
dimensional transmutation: the dimensionless coupling $g$ is replaced by the
mass $m$. The $1/N$-expansion is obtained by expanding the effective action at
this stationary point. The resulting Feynman rules are given by the simple
vertices
\begin{equation}
V_{\sigma}(k)=(-i)\,,~V_{\pi}(k)=\gamma^{5}\,,~V_{A}(k)=\gamma^{5}\gamma
k\,,~V_{B}(k)=-\gamma k \label{4.4}%
\end{equation}
and the propagators in momentum space%
\begin{align}
\tilde{\Delta}_{\sigma\sigma}(k)  &  =-\frac{i\pi}{N}\,\frac{1}{\cosh
^{2}\tfrac{1}{2}\phi}\,\frac{\sinh\phi}{\phi}~,\label{4.6}\\
\tilde{\Delta}_{\pi\pi}(k)  &  =-\frac{i\pi}{N}\,\frac{1}{\sinh^{2}\tfrac
{1}{2}\phi}\left(  \frac{\sinh\phi}{\phi}-1\right)  ~,\nonumber\\
\tilde{\Delta}_{AA}(k)  &  =-\frac{i\pi}{Nk^{2}}\,~,\nonumber\\
\tilde{\Delta}_{BB}(k)  &  =-\frac{i\pi}{Nk^{2}}~,\nonumber\\
\tilde{\Delta}_{\pi A}(k)  &  =\tilde{\Delta}_{A\pi}(k)=-2m\frac{i\pi}{Nk^{2}%
}\,.\nonumber
\end{align}
where $k^{2}=-4m^{2}\sinh^{2}\tfrac{1}{2}\phi$. \begin{figure}[tbh]%
\[
\unitlength4mm\begin{picture}(2.2,4)
\put(1.75,1.8){\oval(3,2)}
\put(0,1.6){$\bullet$}
\put(-1.2,1.6){$V_i$}
\put(3,1.6){$\bullet$}
\put(3.8,1.6){$V_j$}
\put(1.3,0){$p$}
\put(1.9,.8){$\vector(1,0){0}$}
\put(1.6,2.8){$\vector(-1,0){0}$}
\put(.9,3.2){$p+k$}
\end{picture}
\]
\caption{\textit{The bubble graph.}}%
\label{f3}%
\end{figure}
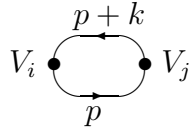To obtain the propagators one calculates the two point vertex
functions $\Gamma_{ij}^{(2)}$ from the bubble graph of Fig.\ref{f3} with the
various vertices and uses $\Delta=i{\Gamma^{(2)}}^{-1}$. In \cite{KKS} it was
argued that the unrenormalized values of $\alpha$ and $\beta$ are to be
replaced by $\alpha\rightarrow\infty$ and $\beta\rightarrow\sqrt{\pi}$. In
that limit the propagators are those of (\ref{4.6}). One observes that the
$A$- and $B$-propagators remain free and the infrared singularity in the $\pi
$-propagator disappears.

As an example we consider the four point vertex function
\[
\tilde{\Gamma}^{(4)}{}_{AB\alpha\beta}^{DC\delta\gamma}(-p_{3},-p_{4}%
,p_{1},p_{2})=\delta_{\alpha}^{\delta}\delta_{\beta}^{\gamma}\,\Gamma
_{AB}^{DC}(p_{2}-p_{3})-\delta_{\alpha}^{\gamma}\delta_{\beta}^{\delta
}\,\Gamma_{AB}^{CD}(p_{3}-p_{1})
\]
where $A,B,C,D$ are spinor indices, $\alpha,\beta,\gamma,\delta$ are isospin
indices and $\Gamma$ is given by the Feynman graph of Fig.~\ref{fa2}.
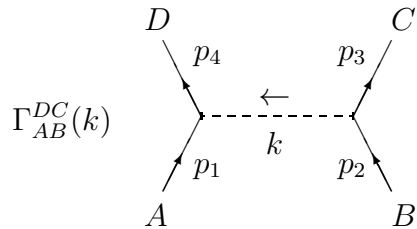
\begin{figure}[tbh]%
\[
\Gamma_{AB}^{DC}(k)%
\begin{array}
[c]{c}%
\unitlength5mm
\begin{picture}(8,6) \put(1,1){\line(1,2){1}} \put(1,1){\vector(1,2){.5}}
\put(2,3){\line(-1,2){1}} \put(2,3){\vector(-1,2){.5}}
\put(7,1){\line(-1,2){1}} \put(7,1){\vector(-1,2){.5}}
\put(6,3){\line(1,2){1}} \put(6,3){\vector(1,2){.5}}
\put(2,3){\dashbox{.2}(4,0){}} \put(3.7,2){$k$}
\put(3.5,3.3){$\leftarrow$} \put(.5,.1){$A$} \put(7,.1){$B$}
\put(7,5.3){$C$} \put(.5,5.3){$D$} \put(1.8,1.5){$p_1$}
\put(5.6,1.5){$p_2$} \put(5.6,4.5){$p_3$} \put(1.8,4.5){$p_4$}
\end{picture}
\end{array}
\]
\caption{\textit{The four point vertex }}%
\label{fa2}%
\end{figure}Taking into account the contributions from all vertices
(\ref{4.4}) and all the propagators (\ref{4.6}) we obtain
\begin{align}
\Gamma(k)  &  =\sum_{i,j}V_{i}(k)\tilde{\Delta}_{ij}(k)V_{j}(-k)\label{4.8}\\
&  =-1\otimes1\,\tilde{\Delta}_{\sigma\sigma}(k)+\gamma^{5}\otimes\gamma
^{5}\,\tilde{\Delta}_{\pi\pi}(k)-\gamma^{5}\gamma k\otimes\gamma^{5}\gamma
k\,\tilde{\Delta}_{AA}(k)\nonumber\\
&  -\gamma^{5}\otimes\gamma^{5}\gamma k\,\tilde{\Delta}_{\pi A}(k)+\gamma
^{5}\gamma k\otimes\gamma^{5}\,\tilde{\Delta}_{A\pi}(k)-\gamma k\otimes\gamma
k\,\tilde{\Delta}_{BB}(k)~.\nonumber
\end{align}
Inserting the expressions for the propagators we finally obtain
\begin{multline}
\Gamma(k)=\frac{i\pi}{N}\left\{  1\otimes1\,\frac{1}{\cosh^{2}(\phi/2)}%
\frac{\sinh\phi}{\phi}-\gamma^{5}\otimes\gamma^{5}\,\frac{1}{\sinh^{2}%
(\phi/2)}\left(  \frac{\sinh\phi}{\phi}-1\right)  \right. \label{4.10}\\
+\left.  \frac{1}{k^{2}}\left(  \gamma^{5}\gamma k\otimes\gamma^{5}\gamma
k+2m\gamma^{5}\otimes\gamma^{5}\gamma k-2m\gamma^{5}\gamma k\otimes\gamma
^{5}+\gamma k\otimes\gamma k\right)  \right\}
\end{multline}
where the tensor product structure of the spinor matrices is obvious from
Fig.~\ref{fa2}. We now apply these results to the examples of section \ref{s3}
and investigate the three particle form factor of the fundamental fermi field
and the two particle form factor of the $SU(N)$ current in $1/N$-expansion in
lowest nontrivial order.

\paragraph{The three particle form factor of the fundamental fermi field:}

For convenience we multiply the field with the Dirac operator, take
\[
{\mathcal{O}_{D\delta}(x)}=\left(  -i\left(  i\gamma\partial-m\right)
\psi(x)\right)  _{D\delta}%
\]
and define
\[
_{out}^{~~\gamma}\langle\,p_{3}\,|\,{\mathcal{O}_{D\delta}}(0)\,|\,p_{1}%
,p_{2}\,\rangle_{\alpha\beta}^{in}={F^{\mathcal{O}_{D\delta}}}_{\alpha\beta
}^{\gamma}(\theta_{12},\theta_{13},\theta_{23})\,.
\]
By means of LSZ-techniques one can express the connected part in terms of the
4-point vertex function
\[
{F_{conn.}^{\mathcal{O}_{\!D\delta}}}_{\alpha\beta}^{\gamma}(\theta
_{12},\theta_{13},\theta_{23})=\bar{u}_{C}(p_{3})\,{{\Gamma}_{AB}^{DC}%
}_{\alpha\beta}^{\delta\gamma}(-p_{3},p_{3}-p_{1}-p_{2},p_{1},p_{2}%
)\,u_{A}(p_{1})u_{B}(p_{2}).
\]
The lowest order contributions are given by the Feynman graphs of
Fig.~\ref{fa3} \begin{figure}[tbh]%
\[%
\begin{array}
[c]{c}%
\unitlength4mm
\begin{picture}(25,5) \put(2,3){\oval(4,2)}
\put(2,3){\makebox(0,0){${\cal O}_\delta(0)$}} \put(.5,1){\line(1,2){.5}}
\put(.5,1){\vector(1,2){.3}} \put(3.5,1){\line(-1,2){.5}}
\put(3.5,1){\vector(-1,2){.3}} \put(3,4){\line(1,2){.5}}
\put(3,4){\vector(1,2){.3}} \put(4,2){$\scriptstyle conn.$}
\put(.1,0){$1$} \put(3.5,0){$2$} \put(3.3,5.3){$3$} \put(6.5,2.7){$=$}
\put(8,1){\line(1,2){1}} \put(8,1){\vector(1,2){.7}}
\put(9,3){\line(-1,2){.8}} \put(9,3){\vector(-1,2){.5}}
\put(7.7,5){$\scriptstyle{\cal O}$} \put(9,3){\dashbox{.2}(2,0){}}
\put(12,1){\line(-1,2){1}} \put(12,1){\vector(-1,2){.7}}
\put(11,3){\line(1,2){1}} \put(11,3){\vector(1,2){.7}}
\put(7.7,0){$1$} \put(12,0){$2$} \put(12.2,5.3){$3$}
\put(13,2.7){$-$} \put(15,1){\line(1,1){2}} \put(15,1){\vector(1,1){.7}}
\put(18,4){\line(1,1){1}} \put(18,4){\vector(1,1){.6}}
\put(15.7,5){$\scriptstyle{\cal O}$} \put(16,2){\dashbox{.2}(3,0){}}
\put(20,1){\line(-1,1){3.8}} \put(20,1){\vector(-1,1){1.7}}
\put(14.7,0){$1$} \put(20,0){$2$} \put(19.2,5.3){$3$}
\put(21,2.7){$+\cdots$} \end{picture}
\end{array}
\]
\caption{\textit{The connected part of the three particle form factor of the
fundamental fermi field in $1/N$-expansion. }}%
\label{fa3}%
\end{figure}
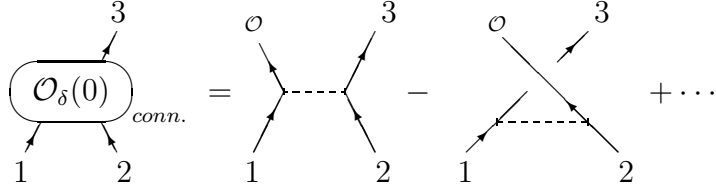%
\[
{F_{conn.}^{\mathcal{O}_{D\delta}}}_{\alpha\beta}^{\gamma}=\bar{u}_{C}%
(p_{3})\left\{  \delta_{\alpha\delta}\delta_{\beta\gamma}\,\Gamma_{AB}%
^{DC}(p_{2}-p_{3})-\delta_{\alpha\gamma}\delta_{\beta\delta}\,\Gamma_{AB}%
^{CD}(p_{3}-p_{1})\right\}  u_{A}(p_{1})u_{B}(p_{2}).
\]
where $\Gamma$ is given by Fig.~\ref{fa2} and eq.~(\ref{4.10}) and the spinor
$u(p)$ by eq.~(\ref{3.34}). It turns out that for $p_{1},~p_{2}$ and $p_{3}$
on-shell several terms vanish or cancel and we obtain up to order $1/N^{2}$%
\begin{multline}
{F_{conn.}^{\mathcal{O}_{\!D\delta}}}_{\alpha\beta}^{\gamma}=\frac{2mi\pi}%
{N}\left\{  \delta_{\alpha\delta}\delta_{\beta\gamma}\frac{\sinh\theta_{23}%
}{\theta_{23}}\left(  \frac{1}{\cosh\frac{1}{2}\theta_{23}}-\gamma^{5}\frac
{1}{\sinh\frac{1}{2}\theta_{23}}\right)  \,u_{D}(p_{1})\right. \label{4.12}\\
-\left.  \delta_{\alpha\gamma}\delta_{\beta\delta}\frac{\sinh\theta_{13}%
}{\theta_{13}}\left(  \frac{1}{\cosh\frac{1}{2}\theta_{13}}-\gamma^{5}\frac
{1}{\sinh\frac{1}{2}\theta_{13}}\right)  \,u_{D}(p_{2})\right\}
\end{multline}
which agrees with the result for the exact form factor (\ref{3.32}). In
\cite{KKS} was shown that if the momentum $p_{4}=p_{1}+p_{2}-p_{3}$ is also
on-shell then the expression (\ref{4.12}) is consistent with the exact
S-matrix (\ref{2.2}).

\paragraph{The 1/N expansion of the $SU(N)$ current form factor:}

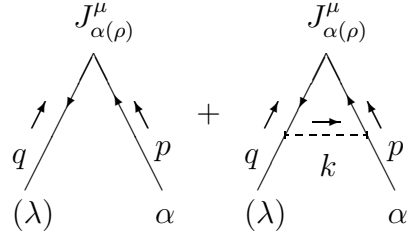
\begin{figure}[tbh]%
\[
\unitlength.9mm\begin{picture}(59,33)
\multiput(0,0)(34,0){2}{
\put(0,0){$(\lambda)$}
\put(21,0){$\alpha$}
\put(0,9){$q$}
\put(21,10){$p$}
\put(9,29){$J_{\alpha(\rho)}^{\mu}$}
\put(2,5){\line(1,2){10}}
\put(8,17){\vector(-1,-2){0}}
\put(15,19){\vector(-1,2){0}}
\put(22,5){\line(-1,2){10}}
\put(3,14){\vector(1,2){2}}
\put(20,14){\vector(-1,2){2}}}
\put(40,13){\dashbox{1}(12,0){}}
\put(27,15){+}
\put(45,7){$k$}
\put(44,15){\vector(1,0){4}}
\end{picture}
\]
\caption{Diagrams contributing to the form factor of the $SU(N)$ current in
the Gross-Neveu models up to order $N^{-2}$.}%
\label{f7}%
\end{figure}

We check the proposed exact form factor (\ref{3.36}) in $1/N$ expansion.
Fig.~\ref{f7} shows the diagrams contributing to $F_{\alpha(\lambda
)}^{J_{\beta(\rho)}^{\pm}}$ in order $N^{0}$ and $N^{-1}$ which give%
\begin{align}
F_{\alpha(\lambda)}^{J_{\beta(\rho)}^{\pm}}  &  =\delta_{\alpha}^{\beta}%
\delta_{(\lambda)}^{(\rho)}\bar{v}(q)\gamma^{\mu}u(p)F^{J}(\theta
)\label{4.14}\\
&  =\delta_{\alpha}^{\beta}\delta_{(\lambda)}^{(\rho)}\bar{v}(q)\gamma^{\pm
}u(p)+\delta_{\alpha}^{\beta}\delta_{(\lambda)}^{(\rho)}\sum_{i,j}\int
\frac{d^{2}k}{(2\pi)^{2}}\Delta_{ij}(k)\nonumber\\
&  \times\left\{  \bar{v}(q)V_{i}(k)\frac{i}{\gamma q+\gamma k-m}\gamma^{\pm
}\frac{i}{\gamma p-\gamma k-m}V_{j}(-k)u(p)-\text{substr.}\right\}
+O(N^{-2})\,. \label{4.16}%
\end{align}
The $k$ integration can be performed using the propagators (\ref{4.6}) and the
vertices (\ref{4.4}). For convenience we write the total 4-point vertex
function (\ref{4.8}) which is a part of (\ref{4.16}) as%
\[
\Gamma=\sum_{i,j}\Delta_{ij}(k)V_{i}(k)\otimes V_{j}(-k)=\Gamma_{\sigma
}+\Gamma_{\pi}+\Gamma_{V}+\Gamma_{rest}%
\]
with%
\begin{align*}
\Gamma_{\sigma}  &  =\frac{i\pi}{N}1\otimes1\,\frac{1}{\cosh^{2}(\phi/2)}%
\frac{\sinh\phi}{\phi}\\
\Gamma_{\pi}  &  =-\frac{i\pi}{N}\gamma^{5}\otimes\gamma^{5}\,\frac{1}%
{\sinh^{2}(\phi/2)}\left(  \frac{\sinh\phi}{\phi}-1\right) \\
\Gamma_{V}  &  =\frac{i\pi}{N}\gamma^{\mu}\otimes\gamma_{\mu}\\
\Gamma_{rest}  &  =\frac{i\pi}{N}\frac{1}{k^{2}}\left(  \gamma^{5}\gamma
k\otimes\gamma^{5}\left(  \gamma k-2m\right)  +\gamma^{5}\left(  \gamma
k+2m\right)  \otimes\gamma^{5}\gamma k\right)
\end{align*}
where $\gamma k\otimes\gamma k=\gamma^{5}\gamma k\otimes\gamma^{5}\gamma
k+k^{2}\gamma^{\mu}\otimes\gamma_{\mu}$ has been used. Correspondingly we
decompose the form factor function $F^{J}(\theta)$ in (\ref{4.14}) (in order
to avoid infra-red problems in the calculation) as%
\[
F^{J}(\theta)=1+\left(  F_{\sigma}(\theta)+F_{\pi}(\theta)+F_{V}%
(\theta)+F_{rest}(\theta)\right)  +O(N^{-2}).
\]
The first contribution $F_{\sigma}(\theta)$ is given by the $O(2N)$%
-Gross-Neveu form factor $F_{-}^{GN}(\theta)$ which has been calculated in
\cite{KW}%
\begin{align*}
F_{\sigma}(\theta)  &  =F_{-}^{GN}(\theta)-1\\
&  =\frac{1}{2N}\left(  \int_{0}^{\infty}\frac{\sinh^{2}\frac{1}{2}\phi}%
{\phi^{2}+\pi^{2}}\left(  \frac{\phi\coth\frac{1}{2}\phi-\hat{\theta}%
\coth\frac{1}{2}\hat{\theta}}{\cosh^{2}\frac{1}{2}\phi-\cosh^{2}\frac{1}%
{2}\hat{\theta}}-\left(  \hat{\theta}\rightarrow0\right)  \right)
d\phi\right)  +\frac{1}{N}\left(  \frac{\hat{\theta}}{\sinh\hat{\theta}%
}-1\right) \\
&  =\frac{1}{2N}\left(  1-\tfrac{1}{2}\hat{\theta}\left(  \coth\tfrac{1}%
{2}\hat{\theta}-\tanh\tfrac{1}{2}\hat{\theta}\right)  -\tfrac{1}{2}\psi\left(
\tfrac{1}{2}+\frac{\hat{\theta}}{2\pi i}\right)  -\tfrac{1}{2}\psi\left(
\tfrac{1}{2}-\frac{\hat{\theta}}{2\pi i}\right)  +\psi\left(  \tfrac{1}%
{2}\right)  \right) \\
&  +\frac{1}{N}\left(  \frac{\hat{\theta}}{\sinh\hat{\theta}}-1\right)
\end{align*}
where $\hat{\theta}=i\pi-\theta$ and $\psi\left(  z\right)  =\left(  \ln
\Gamma\left(  z\right)  \right)  ^{\prime}$. Similarly we obtain%
\begin{multline*}
F_{\pi}(\theta)=-\frac{1}{2N}\left(  \int_{0}^{\infty}\frac{\sinh^{2}\frac
{1}{2}\phi}{\phi^{2}+\pi^{2}}\left(  \frac{\phi\coth\frac{1}{2}\phi
-\hat{\theta}\coth\frac{1}{2}\hat{\theta}}{\cosh^{2}\frac{1}{2}\phi-\cosh
^{2}\frac{1}{2}\hat{\theta}}-\left(  \hat{\theta}\rightarrow0\right)  \right)
d\phi\right) \\
=-\frac{1}{2N}\left(  1-\tfrac{1}{2}\hat{\theta}\left(  \coth\tfrac{1}{2}%
\hat{\theta}-\tanh\tfrac{1}{2}\hat{\theta}\right)  -\tfrac{1}{2}\psi\left(
\tfrac{1}{2}+\frac{\hat{\theta}}{2\pi i}\right)  -\tfrac{1}{2}\psi\left(
\tfrac{1}{2}-\frac{\hat{\theta}}{2\pi i}\right)  +\psi\left(  \tfrac{1}%
{2}\right)  \right)  ,
\end{multline*}
and%
\begin{align*}
F_{V}(\theta)  &  =\frac{1}{2N}\frac{\hat{\theta}}{\sinh\hat{\theta}}\left(
\cosh\hat{\theta}-1\right) \\
F_{rest}(\theta)  &  =0
\end{align*}
and therefore%
\[
F^{J}(\theta)=1-\frac{1}{N}\left(  1-\frac{1}{2}\frac{\hat{\theta}}{\tanh
\frac{1}{2}\hat{\theta}}\right)  +O(N^{-2})
\]
which agrees with the $1/N$-expansion of the exact result for form factor of
the the current derived in section \ref{s3}.

\section{Commutation rules}

\label{s5}

In \cite{BFK} commutation rules were derived for the $Z(N)$ scaling Ising
models. The results for the $SU(N)$ Gross-Neveu model are similar, however,
the proof is much more complicated because of the unusual crossing relations
(\ref{2.38}) and (related to this) the Klein factors (\ref{2.36}).

Let $|\,\underline{\theta}\,\rangle_{\underline{\alpha}}^{in}$ with
$\underline{\alpha}=((\alpha_{11},\dots,\alpha_{1r_{1}}),\dots,(\alpha
_{\alpha1},\dots,\alpha_{\alpha r_{\alpha}}))$ be a state of $\alpha$
particles of rank $r_{1},\dots,r_{\alpha}~(1\leq r_{j}\leq N-1)$ (or bound
states of $r_{j}$ particles of rank 1). We define the charge of a state to be
the sum of all ranks of the particles in the state%
\[
Q_{\alpha}=\sum_{j=1}^{\alpha}r_{j}\,.
\]
The weight $w_{i}(\underline{\alpha})~(1\leq i\leq N)$ of the state
$\underline{\alpha}$ is equal to the number of $\alpha_{jk}=i$. Therefore the
total charge of the state $\underline{\alpha}$ is%
\[
Q_{\alpha}=\sum_{i=1}^{N}w_{i}(\underline{\alpha})
\]
(see appendix \ref{sc}). If $\underline{\alpha}$ is a state for which the form
factor $F_{\underline{\alpha}}^{\psi}(\underline{\theta})$ does not vanish we
use (\ref{3.10}) and define the charge of the operator $\psi$ by%
\begin{equation}
Q_{\psi}=Q_{\alpha}\operatorname{mod}N=\sum_{i=1}^{N}w_{i}^{\psi
}\operatorname{mod}N\nonumber
\end{equation}
with $0\leq Q_{\psi}<N$.

\subparagraph{Examples:}

For the energy momentum tensor $T^{\mu\nu}(x)$ (which is a $SU(N)$ scalar) the
fundamental field $\psi_{\alpha}(x)$ (which is a $SU(N)$ vector) and the
$SU(N)$ current $J_{\alpha\beta}^{\mu}(x)$ (which transforms as the adjoint
representation) the weights and the charges are
\[%
\begin{array}
[c]{ll}%
w^{T}=\left(  0,0,\dots,0,0\right)  , & ~Q_{T}=0\\
w^{\psi}=\left(  1,0,\dots,0,0\right)  , & ~Q_{\psi}=1\\
w^{J}=\left(  2,1,\dots,1,0\right)  , & ~Q_{J}=0\,.
\end{array}
\]

\begin{theorem}
The equal time commutation rule of two fields $\phi(x)$ and $\psi(y)$ with
charge $Q_{\phi}$ and $Q_{\psi}$, respectively, is (in general anyonic)%
\begin{equation}
\phi(x)\psi(y)=\psi(y)\phi(x)\exp\left(  2\pi i\epsilon(x^{1}-y^{1})\tfrac
{1}{2}\left(  1-1/N\right)  Q_{\phi}Q_{\psi}\right)  \,. \label{5.2}%
\end{equation}

\end{theorem}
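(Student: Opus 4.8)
The plan is to use the form factor representation (\ref{3.2}) of matrix elements together with the crossing properties and the LSZ reduction to compute both orderings of the product $\phi(x)\psi(y)$ between arbitrary states, and compare them. The key idea is that the equal time commutator is controlled entirely by the behaviour of form factors under analytic continuation of rapidities — specifically by the monodromy that arises when a particle rapidity is shifted by $2\pi i$, which by the form factor equation (iii)/(v) produces exactly the statistics factor $\sigma_1^{\mathcal{O}}=e^{i\pi(1-1/N)Q^{\mathcal{O}}}$ of (\ref{3.14}), refined by the sign factors $\tilde\sigma_1^{\mathcal{O}}/\sigma_1^{\mathcal{O}}$ coming from the Klein factors of (\ref{2.36}) and the unusual crossing (\ref{2.38}).

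Concretely, I would first insert a complete set of intermediate states and write
\[
\langle\,p\,|\,\phi(x)\psi(y)\,|\,p'\,\rangle
=\sum_{n}\int d\underline{\theta}\,
\langle\,p\,|\,\phi(x)\,|\,\underline{\theta}\,\rangle
\langle\,\underline{\theta}\,|\,\psi(y)\,|\,p'\,\rangle
\]
and similarly for $\psi(y)\phi(x)$. Translating each operator to the origin produces exponential factors $e^{\mp i x(\sum p_n)}$, and passing to the opposite operator ordering amounts to moving the contour of the intermediate-state rapidity integrations across the poles/branch cuts of the form factors. At \emph{equal time} $x^0=y^0$, the spatial exponential $e^{i(x^1-y^1)\sum p_n^1}$ allows one to close the contour in the half-plane selected by $\mathrm{sign}(x^1-y^1)$, so the difference between the two orderings is a sum of residues/monodromy contributions. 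The crucial input is that crossing a particle of rank $r$ around the cut picks up the factor $\exp(i\pi(1-1/N)r\,Q_\phi)$ from $\phi$ (its statistics with respect to a rank-$1$ particle raised to the power $r$, consistent with a rank-$r$ particle being a bound state of $r$ rank-$1$ ones), and summing the ranks $r$ over all intermediate particles replaces $r$ by $Q_\psi=\sum_j r_j \bmod N$. This yields the overall factor $\exp\!\big(2\pi i\,\epsilon(x^1-y^1)\tfrac12(1-1/N)Q_\phi Q_\psi\big)$.

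The order of steps I would carry out: (1) set up the two-sided matrix element with a complete set of states and reduce to a statement about a single rapidity contour shift, using translation invariance; (2) recall from (\ref{3.12})–(\ref{3.14}) and the form factor axioms that shifting $\theta_i\to\theta_i+2\pi i$ in $F^\phi$ multiplies it by $\tilde\sigma_1^\phi$ for a rank-$1$ particle, and derive the analogous statement for rank-$r$ particles by the bound state fusion of appendix \ref{sb}, getting a factor that is effectively $\sigma_1^\phi{}^{\,r}$ up to Klein signs; (3) combine the monodromy factor from $F^\phi$ with the corresponding one needed to bring $F^\psi$ back into standard order, track the Klein-factor signs $\tilde\sigma/\sigma$ carefully so that they cancel pairwise between $\phi$ and $\psi$ (this cancellation is what makes the final answer depend only on $Q_\phi Q_\psi$ and not on the individual weight vectors); (4) do the contour-closing argument at equal time to produce the $\epsilon(x^1-y^1)$ and assemble (\ref{5.2}).

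The main obstacle I expect is step (3): the bookkeeping of the Klein-factor signs. The factors $\tilde\sigma_1^{\mathcal{O}}/\sigma_1^{\mathcal{O}}=(-1)^{(N-1)[\sum w_i^{\mathcal{O}}/N]-\sum_{i\ge2}w_i^{\mathcal{O}}}$ and $(-1)^{w_l^{\mathcal{O}}+w_{l+1}^{\mathcal{O}}}$ in (\ref{3.12}) are genuinely $\mathcal{O}$-dependent and do \emph{not} individually reduce to functions of $Q_\phi$ alone; one has to show that when the contributions of $\phi$ and $\psi$ (and of all the intermediate bound states, each being a fusion of rank-$1$ constituents) are multiplied together, all these signs cancel, leaving only the smooth anyonic phase $e^{i\pi(1-1/N)Q_\phi Q_\psi}$. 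This requires using $Q_\phi=\sum_i w_i^\phi \bmod N$ together with $SU(N)$ weight conservation at each vertex, and is exactly where the unusual crossing relation (\ref{2.38}) with its $(-1)^{N-1}$ enters. I would relegate the detailed sign chase, together with the precise contour manipulations, to appendix \ref{sc}, where the commutation rule of two fields is proved in full.
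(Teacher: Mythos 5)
Your skeleton --- insert a complete set of intermediate states, relate the two operator orderings by analytic continuation of the intermediate rapidities at equal time, and read off an anyonic phase coming from statistics and Klein factors --- is the same as the paper's (appendix \ref{sc}). However, the proposal has concrete gaps at exactly the points where the proof lives. The paper's essential input is the pair of generalized crossing relations (\ref{B.10})--(\ref{B.12}): each form factor is decomposed over all splittings of its particle sets into a part crossed by $i\pi_{-}$ to antiparticles (via $\mathbf{C}$) and disconnected $\mathbf{1}$-contractions dressed by partial S-matrices, with explicit prefactors $\sigma^{\psi}_{(\gamma)}$, $\zeta^{\psi}_{(\gamma,\alpha,\eta)}$, $\xi^{\phi}_{(\beta,\gamma,\rho)}$ carrying the Klein signs and the anyonic statistics. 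These are applied asymmetrically to the two orderings, and the phase in (\ref{5.2}) is literally the ratio of these prefactors, evaluated with the $SU(N)$ sum rules (\ref{B.4})--(\ref{B.8}). You never write down these crossing relations, so the ``sign chase'' you defer cannot even be set up; your assertion that the Klein signs ``cancel pairwise between $\phi$ and $\psi$'' is precisely the identity that constitutes the theorem (the paper itself only states that it follows from a ``long and cumbersome but straightforward calculation''), and relegating it to appendix \ref{sc} is circular in a blind attempt.

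Second, the analytic-continuation step is misdescribed. The mechanism is neither closing the rapidity contour in a half-plane and summing residues, nor a $2\pi i$ monodromy of a single form factor: one shifts the surviving intermediate rapidities $\underline{\tilde{\theta}}_{\omega}$ by $+i\pi_{-}$, using that $e^{i\tilde{P}_{\omega}(x-y)}$ decays in the strip for $x^{1}<y^{1}$; this sends $\tilde{P}_{\omega}\rightarrow-\tilde{P}_{\omega}$ and maps the integrand of $\psi(y)\phi(x)$ onto that of $\phi(x)\psi(y)$ with the intermediate particles replaced by antiparticles. A rapidity contour cannot be closed at infinity (the factor $e^{im\sinh\theta\,(x^{1}-y^{1})}$ does not decay as $\operatorname{Re}\theta\rightarrow\pm\infty$ in a fixed horizontal strip), and the shift picks up no residues only because bound-state poles are ignored at that stage --- the paper explicitly restricts to the no-bound-state situation and defers the general case to \cite{BFK}, whereas for $SU(N)$ with $N>2$ such poles do lie in the physical strip; your sketch does not address them. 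Finally, the heuristic that the phase is $(\sigma_{1}^{\phi})^{\sum r_{j}}$ with the ranks summed over the whole intermediate state ignores that in each term of the crossing decomposition only the subset $\underline{\omega}$ of the intermediate state is continued, with a charge that varies from term to term, while the final phase must be independent of the term and of the external states $\underline{\alpha},\underline{\beta}$; establishing that independence is the nontrivial content of the calculation, not a bookkeeping afterthought.
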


The proof of this theorem can be found in appendix \ref{sc}.

\section{Conclusions\ }

In this paper the general $SU(N)$ form factor formula is constructed. As an
application of this result exact $SU(N)$ form factors for the field, the
energy momentum tensor and the current operators are derived in detail. In the
large $N$ limit\ these form factors are compared with the $1/N$-expansion of
the Gross-Neveu model and full agreement is found. The commutation rules of
arbitrary fields are derived and in general anyonic behavior is found. We
believe that our results may be relevant for the computation of correlation
functions in fermionic ladders \cite{KonL}. In addition the series of the
$1/N$-expansion of our exact form factors could hopefully help to understand
the same series in QCD.

\label{s6}

\subsection*{Acknowledgments}

We thank R. Schrader, B. Schroer, and A. Zapletal for useful discussions. 
In particular we thank A. Fring who participated actively in the
beginning of the SU(N)-project many years ago.
H.B. and M.K were supported by the Humboldt Foundation and H.B. also by ISTC1602.
A.F. acknowledges support from CNPq (Conselho Nacional de Desenvolvimento
Cient\'{\i}fico e Tecnol\'{o}gico). This work was also supported by the EU
network EUCLID, 'Integrable models and applications: from strings to condensed
matter', HPRN-CT-2002-00325.

\appendix

\section*{Appendix}

\section{Bound state form factors\label{sb}}

\paragraph{Proof of formulae (%
%TCIMACRO{\TeXButton{TeX field}{\protect\ref{3.18}) -- (\protect\ref{3.20}}}%
%BeginExpansion
\protect\ref{3.18}) -- (\protect\ref{3.20}%
%EndExpansion
):}

Here we present a sketch of the proof for the bound state form factors
formula. For simplicity several formulae will be written only up to constants,
the normalization can be fixed at the end by the physical properties of the
operator. The form factor formula for particles of rank 1 was proved in
\cite{BFK1}. Applying the bound state procedure to this result we derive the
formula for $\bar{n}=1$ bound state of rank $N-1$, the general case $\bar
{n}>1$ follows easily.

The bound state intertwiner \cite{BK} is defined by
\[
i\operatorname*{Res}_{\varphi_{N-1N-2}=i\eta}\dots i\operatorname*{Res}%
_{\varphi_{21}=i\eta}S_{\underline{\mu}}^{\underline{\lambda}}(\underline
{\bar{\varphi}})=\Gamma_{(\rho)}^{\underline{\lambda}}\Gamma_{\underline{\mu}%
}^{(\rho)}%
\]
where the S-matrix $S_{\underline{\mu}}^{\underline{\lambda}}(\underline
{\bar{\varphi}})$ exchanges all particles with rapidities $\underline
{\bar{\varphi}}=\varphi_{N-1},\dots,\varphi_{1}\rightarrow\underline{\varphi
}=\varphi_{1},\dots,\varphi_{N-1}$. It satisfies the bound state fusion
equation%
\begin{equation}
\Gamma_{\underline{\mu}}^{(\sigma)}S_{\underline{\lambda}\alpha}%
^{\beta\underline{\mu}}(\underline{\theta},\theta)=S_{(\rho)\alpha}%
^{\beta(\sigma)}(\omega,\theta)\Gamma_{\underline{\lambda}}^{(\rho)}\,.
\label{A.2}%
\end{equation}

\begin{lemma}
The form factor for $n$ particles $\underline{\alpha}=\alpha_{1},\dots
,\alpha_{n}$ of rank $1$ and one bound state $(\rho)=(\rho,\dots,\rho_{N-1})$
(with $\rho_{1}<\dots<\rho_{N-1}$) of rank $N-1$ may be written as%
\[
F_{\underline{\alpha}(\rho)}(\underline{\theta},\omega)=\left(  \sqrt
{2}i\right)  ^{2-N}F_{\underline{\alpha}\underline{\lambda}}(\underline
{\theta}\underline{\varphi})\Gamma_{(\rho)}^{\underline{\lambda}%
}\,,~~\text{with }\omega=\frac{1}{N-1}\left(  \varphi_{1}+\dots+\varphi
_{N-1}\right)
\]
for $\varphi_{N-1N-2}=\dots=\varphi_{21}=i\eta$, i.e. $\varphi_{j}%
=\omega+ji\eta-i\pi$.
\end{lemma}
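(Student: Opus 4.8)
The plan is to obtain the Lemma from the form factor equation (iv), the bound state residue equation, applied iteratively $N-2$ times --- this is exactly the ``bound state fusion procedure'' that was already used to pass from the rank-$1$ S-matrix (\ref{2.2}) to (\ref{2.4}) and to the intertwiner equation (\ref{A.2}). Recall that (iv) states that when two neighbouring rank-$1$ particles in a form factor are brought to the rapidity difference $i\eta$ of the bound state channel of (\ref{2.2}), the form factor develops a simple pole whose residue equals (up to a constant) the form factor with those two particles replaced by the fused rank-$2$ particle, times the bound state intertwiner $\Gamma$ of \cite{BK}. So, starting from the rank-$1$ form factor formula proved in \cite{BFK1}, I would take the $N-2$ successive residues $i\operatorname*{Res}_{\varphi_{21}=i\eta},\ i\operatorname*{Res}_{\varphi_{32}=i\eta},\ \dots,\ i\operatorname*{Res}_{\varphi_{N-1\,N-2}=i\eta}$ of $F_{\underline\alpha\lambda_1\cdots\lambda_{N-1}}(\underline\theta,\varphi_1,\dots,\varphi_{N-1})$ and read off the result.

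First I would check the kinematics. With $\varphi_j=\omega+ji\eta-i\pi$ one has $\varphi_{j+1}-\varphi_j=i\eta$ for all $j$, so all the residues above sit at genuine bound state points, and $\tfrac1{N-1}\sum_{j=1}^{N-1}\varphi_j=\omega+\tfrac12 i\eta N-i\pi=\omega$ since $\eta=2\pi/N$; this is the ``center of mass'' rapidity of the rank-$(N-1)$ bound state, consistent with $m_{N-1}=m_1$. Then the iteration: the residue at $\varphi_{21}=i\eta$ fuses $\lambda_1,\lambda_2$ into a rank-$2$ particle at $\tfrac12(\varphi_1+\varphi_2)$; the next residue fuses this with $\lambda_3$ at $\varphi_3$, the relevant rapidity difference being $\tfrac32 i\eta$, which is precisely the bound state point of the rank-$2$--rank-$1$ S-matrix obtained by fusion from (\ref{2.2}); and so on up the chain. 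At each step the intertwiner equation (\ref{A.2}) guarantees that the emerging $\Gamma$'s compose associatively, so that after $N-2$ steps their product is exactly the iterated-residue object $\Gamma_{(\rho)}^{\underline\lambda}$ of the definition preceding the Lemma and the surviving particle is the rank-$(N-1)$ bound state $(\rho)$ at rapidity $\omega$. That the vectorial (nested Bethe ansatz) structure survives correctly is checked by the weight bookkeeping: at the $l$-th level of the nesting the number of $C$-operators drops by $N-l-1$, which is exactly what turns the rank-$1$ weight (\ref{2.22}) into the mixed weight (\ref{2.24}).

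Two points require genuine care. The first is the scalar (minimal) part: $F_{\underline\alpha\underline\lambda}$ carries $\prod_{1\le i<j\le N-1}F(\varphi_{ij})$ and $\prod_{i=1}^n\prod_{j=1}^{N-1}F(\theta_i-\varphi_j)$, whereas the target $F_{\underline\alpha(\rho)}$ should carry $\prod_{i=1}^n G(\theta_i-\omega)$ and no minimal factor for the single bound state. One must show that at $\varphi_j=\omega+ji\eta-i\pi$ the product $\prod_{j=1}^{N-1}F\big((\theta_i-\omega+i\pi)-ji\eta\big)$ collapses, up to a constant, to $G(\theta_i-\omega)$, and that $\prod_{i<j}F(\varphi_{ij})$ collapses to a pure constant; both follow from the $\phi$-function identities (\ref{2.30}) and (\ref{2.32}) --- indeed (\ref{2.32}) was imposed precisely as the statement that ``the antiparticle is a bound state of $N-1$ fundamental particles''. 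The second point is the bookkeeping of constants: each factor of $i$ from $i\operatorname*{Res}$, the normalization of $\Gamma$ fixed as in \cite{BK}, and the finite values of the collapsing minimal form factors at the fusion points each contribute numerically, and assembling the $N-2$ steps yields the prefactor $(\sqrt2\,i)^{2-N}$ (which correctly reduces to $1$ at $N=2$, where no fusion is needed). I expect this constant bookkeeping, together with the verification that the integration contour $\mathcal C_{\underline\theta}$ of (\ref{3.4}) deforms into $\mathcal C_{\underline\theta\underline\omega}$ of (\ref{2.14})/(\ref{3.18}) --- picking up precisely the $\omega$-dependent poles of type (\ref{2.16}) created by the fusion --- to be the main technical obstacle; the structural factorization itself is essentially forced by (iv) and (\ref{A.2}).
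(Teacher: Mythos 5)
Your overall strategy --- iterated application of the bound state residue equation (iv), as in the fusion that produced (\ref{2.4}) and (\ref{A.2}) --- is the same as the paper's, and your kinematic checks ($\varphi_{j+1}-\varphi_j=i\eta$, $\tfrac{1}{N-1}\sum\varphi_j=\omega$) are correct. But there is a gap at the decisive step. Equation (iv) gives you a \emph{residue} of a rank-$1$ form factor: iterating it yields
$i\operatorname*{Res}_{\varphi_{N-1N-2}=i\eta}\cdots i\operatorname*{Res}_{\varphi_{21}=i\eta}F_{\underline{\alpha}\underline{\mu}}(\underline{\theta}\,\underline{\bar{\varphi}})=\left(\sqrt{2}i\right)^{N-2}F_{\underline{\alpha}(\rho)}(\underline{\theta},\omega)\,\Gamma_{\underline{\mu}}^{(\rho)}$,
which is essentially the definition of the left-hand side. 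The Lemma, however, asserts something different: that $F_{\underline{\alpha}(\rho)}$ equals the \emph{value} of the form factor $F_{\underline{\alpha}\underline{\lambda}}(\underline{\theta}\,\underline{\varphi})$ with the $\varphi_j$ in increasing order, evaluated at the fusion point and contracted with $\Gamma_{(\rho)}^{\underline{\lambda}}$. Your proposal never explains why that value is finite (a form factor generically has poles at bound state points) nor how a residue becomes a value. The paper bridges this in one line using form factor equation (i): write $F_{\underline{\alpha}\underline{\mu}}(\underline{\theta}\,\underline{\bar{\varphi}})=F_{\underline{\alpha}\underline{\lambda}}(\underline{\theta}\,\underline{\varphi})S_{\underline{\mu}}^{\underline{\lambda}}(\underline{\bar{\varphi}})$, where $\underline{\bar{\varphi}}$ is the \emph{reversed} ordering. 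All the fusion poles then sit in the exchange S-matrix $S_{\underline{\mu}}^{\underline{\lambda}}(\underline{\bar{\varphi}})$, the ordered form factor is regular at the fusion point and factors out of the residues, and the iterated residue of $S$ is, by the very definition given just before the Lemma, $\Gamma_{(\rho)}^{\underline{\lambda}}\Gamma_{\underline{\mu}}^{(\rho)}$; cancelling $\Gamma_{\underline{\mu}}^{(\rho)}$ gives the Lemma. Without this ordering device your step-by-step fusion either acts on the ordered form factor at points where it is regular (giving zero) or reproduces the defining relation above without reaching the stated identity; and (\ref{A.2}) is an intertwining property with the S-matrix, not by itself the associativity statement you invoke.

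A secondary remark: your third paragraph, on the collapse of $\prod F(\varphi_{ij})$ and $\prod_j F(\theta_i-\varphi_j)$ into $G(\theta_i-\omega)$ via (\ref{2.30})--(\ref{2.32}) and on the deformation of the contour into $\mathcal{C}_{\underline{\theta}\underline{\omega}}$, is correct in spirit but is not part of this Lemma --- it belongs to the subsequent derivation of (\ref{3.18})--(\ref{3.20}), where the paper carries it out explicitly after the Lemma is established.
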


\begin{proof}
We start with a form factor $F_{\underline{\alpha}\underline{\mu}}%
(\underline{\theta}\underline{\bar{\varphi}})$ for $n+N-1$ particles of rank
$1$ with rapidities $\underline{\theta}=\theta_{1},\dots,\theta_{n}%
,\underline{\bar{\varphi}}=\varphi_{N-1},\dots,\varphi_{1}$ and quantum
numbers $\underline{\alpha}=\alpha_{1},\dots,\alpha_{n},\underline{\mu}%
=\mu_{1},\dots,\mu_{N-1}$ (for convenience we use for $\underline{\bar
{\varphi}}$ an inverse numbering). Applying iteratively the bound state fusion
procedure (see e.g. \cite{BK,BFK1}) we obtain the bound state form factor%
\begin{align}
F_{\underline{\alpha}(\rho)}(\underline{\theta},\omega)\left(  \sqrt
{2}i\right)  ^{N-2}\Gamma_{\underline{\mu}}^{(\rho)}  &  =i\operatorname*{Res}%
_{\varphi_{N-1N-2}=i\eta}\dots i\operatorname*{Res}_{\varphi_{21}=i\eta
}F_{\underline{\alpha}\underline{\mu}}(\underline{\theta}\underline
{\bar{\varphi}})\label{A.4}\\
&  =F_{\underline{\alpha}\underline{\lambda}}(\underline{\theta}%
\underline{\varphi})i\operatorname*{Res}_{\varphi_{N-1N-2}=i\eta}\dots
i\operatorname*{Res}_{\varphi_{21}=i\eta}S_{\underline{\mu}}^{\underline
{\lambda}}(\underline{\bar{\varphi}})\nonumber\\
&  =F_{\underline{\alpha}\underline{\lambda}}(\underline{\theta}%
\underline{\varphi})\Gamma_{(\rho)}^{\underline{\lambda}}\Gamma_{\underline
{\mu}}^{(\rho)}\nonumber
\end{align}
where the form factor equation (i) $F_{\underline{\alpha}\underline{\mu}%
}(\underline{\theta}\underline{\bar{\varphi}})=F_{\underline{\alpha}%
\underline{\lambda}}(\underline{\theta}\underline{\varphi})S_{\underline{\mu}%
}^{\underline{\lambda}}(\underline{\bar{\varphi}})$ (see e.g. \cite{BFK1}) has
been used.
\end{proof}

We start from the K-function $K_{\underline{\alpha}\underline{\lambda}%
}^{\mathcal{O}}(\underline{\theta}\underline{\varphi})$ for particles of rank
1 given by the general formula (\ref{3.4}) where we replace $\underline
{\theta}\rightarrow\underline{\theta}\underline{\varphi},\,(\underline
{\varphi}=\varphi_{1},\dots,\varphi_{N-1})$ and integration variables
$\underline{z}\rightarrow\underline{z}\underline{y},\,(\underline{y}%
=y_{1},\dots,y_{N-2})$
\[
K_{\underline{\alpha}\underline{\lambda}}^{\mathcal{O}}(\underline{\theta
}\underline{\varphi})=\int d\underline{z}\,\int d\underline{y}\tilde
{h}(\underline{\theta}\underline{\varphi},\underline{z}\underline
{y})\,p(\underline{\theta}\underline{\varphi},\underline{z}\underline
{y})\,\tilde{\Psi}_{\underline{\alpha}}(\underline{\theta}\underline{\varphi
},\underline{z}\underline{y})\,.
\]
The state $\tilde{\Psi}_{\underline{\alpha}}$ is a linear combination of the
basic Bethe ansatz co-vectors (\ref{2.20}) (for $\bar{n}=0$)%
\[
\tilde{\Psi}_{\underline{\alpha}}(\underline{\theta}\underline{\varphi
},\underline{z}\underline{y})=L_{\underline{\beta}}^{(1)}(\underline
{z}\underline{y})\tilde{\Phi}_{\underline{\alpha}}^{\underline{\beta}%
}(\underline{\theta}\underline{\varphi},\underline{z}\underline{y}%
)\,,~~\text{with }1<\beta_{i}\,
\]
where the $L_{\underline{\beta}}^{(1)}(\underline{z}\underline{y})$ again
satisfy a representation like the $K_{\underline{\alpha}\underline{\lambda}%
}^{\mathcal{O}}(\underline{\theta}\underline{\varphi})$. Iterating this
nesting procedure we arrive at
\[
K_{\underline{\alpha}\underline{\lambda}}^{\mathcal{O}}(\underline{\theta
}\underline{\varphi})=\int d\underline{z}^{(1)}\int d\underline{y}%
^{(1)}\,\dots\int d\underline{z}^{(N-1)}\int\underline{y}^{(N-1)}\tilde
{h}\,p^{\mathcal{O}}\,\tilde{\Phi}_{\underline{\alpha}\underline{\lambda}}%
\]
where the functions $\tilde{h},~\,p^{\mathcal{O}}$ and $\tilde{\Phi
}_{\underline{\alpha}\underline{\lambda}}$ depend on the variables
$\underline{\theta}\underline{\varphi},\underline{\underline{z}}%
\underline{\underline{y}}$, $(\underline{\underline{z}}=\underline{z}%
^{(1)},\dots,\underline{z}^{(N-1)}$, $\underline{\underline{y}}=\underline
{y}^{(1)},\dots,\underline{y}^{(N-2)}$, $\underline{y}^{(l)}=y_{1}^{(l)}%
,\dots,y_{N-1-l}^{(l)}$, $(l=1,\dots,N-2))$. If we take the residues in
(\ref{A.4}) at $\varphi_{i+1,i}=i\eta$ the pinching phenomenon (see
\cite{BFK1} and Fig. \ref{f2}) appears at $y_{1}^{(1)}=\varphi_{1}%
,\dots,y_{N-2}^{(1)}=\varphi_{N-2}$. This propagates to the higher level
integrations such that we may replace $\underline{y}^{(l)}\rightarrow
\underline{\varphi}^{(l)}=\varphi_{1},\dots,\varphi_{N-1-l}$ which are related
to $\omega$ by $\varphi_{j}=\omega+ji\eta-i\pi$. The h-function (\ref{3.6})
for the lowest level Bethe ansatz then takes the form (up to a constant)%
\begin{align*}
\tilde{h}(\underline{\theta},\underline{\varphi},\underline{z}^{(1)}%
,\underline{\varphi}^{(1)})  &  =\tilde{\phi}(\underline{\theta}-\underline
{z})\tilde{\phi}(\underline{\theta}-\underline{\varphi}^{(1)})\tilde{\phi
}(\underline{\varphi}-\underline{z})\tau(\underline{z})\tau(\underline
{z}-\underline{\varphi}^{(1)})\\
&  =\tilde{h}(\underline{\theta},\underline{z})\tilde{\phi}(\underline{\theta
}-\underline{\varphi}^{(1)})\tilde{\phi}(\underline{\varphi}-\underline
{z})\tau(\underline{z}-\underline{\varphi}^{(1)})\,.
\end{align*}
Here and in the following we use the short notation%
\[
\tilde{\phi}(\underline{\theta}-\underline{\varphi}^{(1)})=\prod_{i=1}%
^{n}\prod_{j=1}^{N-2}\tilde{\phi}(\theta_{i}-\varphi_{j}^{(1)}),~\tau
(\underline{z})=\prod_{1\leq i<j\leq n_{1}}^{n}\tau(z_{i}-z_{j})
\]
et cetera, where the product is taken over all indices. The Bethe ansatz
states defined by (\ref{2.20}) are related by%
\[
\tilde{\Phi}_{\underline{\alpha}\underline{\lambda}}^{\underline{\beta
}\underline{\mu}}(\underline{\theta},\underline{\varphi},\underline
{z},\underline{\varphi}^{(1)})\Gamma_{(\rho)}^{\underline{\lambda}}=\tilde
{b}(\underline{\theta}-\underline{\varphi}^{(1)})\frac{b(i\pi-\omega
+\underline{z})}{a(\underline{\varphi}-\underline{z})}\Gamma_{(\sigma
)}^{\underline{\mu}1}\tilde{\Phi}_{\underline{\alpha}(\rho)}^{\underline
{\beta}(\sigma)}(\underline{\theta},\omega,\underline{z})
\]
where the bound state relation (\ref{A.2}) together with (\ref{2.8}) and
(\ref{2.6}) has been used. These equations together imply%
\[
\tilde{h}(\underline{\theta},\underline{\varphi},\underline{z},\underline
{\varphi}^{(1)})\tilde{\Phi}_{\underline{\alpha}\underline{\lambda}%
}^{\underline{\beta}\underline{\mu}}(\underline{\theta},\underline{\varphi
},\underline{z},\underline{\varphi}^{(1)})\Gamma_{(\rho)}^{\underline{\lambda
}}=\frac{\tilde{\phi}(\underline{\varphi}^{(1)}+i\eta-\underline{\theta}%
)}{\tilde{\phi}(\underline{\varphi}^{(2)}+i\eta-\underline{z})}\tilde
{h}(\underline{\theta},\underline{z})\Gamma_{(\sigma)}^{\underline{\mu}%
1}\tilde{\Phi}_{\underline{\alpha}(\rho)}^{\underline{\beta}(\sigma
)}(\underline{\theta},\omega,\underline{z})\,.
\]
The equations for bound state rapidity $\omega=\varphi_{1}-i\eta+i\pi
=\varphi_{N-1}+i\eta-i\pi$ and the relations $\tilde{b}(z-\varphi_{j}%
)\tilde{\phi}(z-\varphi_{j})=-\tilde{\phi}(\varphi_{j+1}-z)$ and $\tilde{\phi
}(\varphi_{1}-z)/\tilde{\phi}(z-\varphi_{N-1})=-b(i\pi-\omega+z)$ have been
used. Therefore we obtain the integral representation%
\begin{align*}
K_{\underline{\alpha}(\rho)}(\underline{\theta},\omega)  &  =\int
d\underline{z}\tilde{h}(\underline{\theta},\underline{z})p(\underline{\theta
},\omega,\underline{z})\tilde{\Psi}_{\underline{\alpha}(\rho)}(\underline
{\theta},\omega,\underline{z})\\
\tilde{\Psi}_{\underline{\alpha}(\rho)}(\underline{\theta},\omega
,\underline{z})  &  =L_{\underline{\beta}(\sigma)}^{(1)}(\underline{z}%
,\omega)\tilde{\Phi}_{\underline{\alpha}(\rho)}^{\underline{\beta}(\sigma
)}(\underline{\theta},\omega,\underline{z})\,,~~\text{with }1<\beta
_{i},~\sigma_{1}=1<\sigma_{2}<\dots<\sigma_{N-1}%
\end{align*}
with the new K-, L- and p-functions given in terms of the old ones
\begin{align*}
K_{\underline{\alpha}(\rho)}(\underline{\theta},\omega)  &  =\frac{1}%
{\tilde{\phi}(\underline{\varphi}^{(1)}+i\eta-\underline{\theta}%
)}K_{\underline{\alpha}\underline{\lambda}}(\underline{\theta},\underline
{\varphi})\Gamma_{(\rho)}^{\underline{\lambda}}\\
L_{\underline{\beta}(\sigma)}^{(1)}(\underline{z},\omega)  &  =\frac{1}%
{\tilde{\phi}(\underline{\varphi}^{(2)}+i\eta-\underline{z})}L_{\underline
{\beta}\underline{\mu}}^{(1)}(\underline{z}\underline{\varphi}^{(1)}%
)\Gamma_{(\sigma)}^{\underline{\mu}1}\\
p(\underline{\theta},\omega,\underline{z})  &  =p(\underline{\theta}%
\underline{\varphi},\underline{z}\underline{\varphi}^{(1)}),~\left(
\varphi_{j}=\omega+ji\eta-i\pi\right)  .
\end{align*}
Correspondingly we obtain for a higher level Bethe ansatz $l=1,\dots,N-3$%
\[
\tilde{h}(\underline{z}^{(l)},\underline{\varphi}^{(l)},\underline{z}%
^{(l+1)},\underline{\varphi}^{(l+1)})=\tilde{h}(\underline{z}^{(l)}%
,\underline{z}^{(l+1)})\tilde{\phi}(\underline{z}^{(l)}-\underline{\varphi
}^{(l+1)})\tilde{\phi}(\underline{\varphi}^{(l)}-\underline{z}^{(l+1)}%
)\tau(\underline{z}^{(l+1)}-\underline{\varphi}^{(l+1)})
\]
and%
\begin{align*}
&  \tilde{\Phi}^{(l)}{}_{\underline{\beta}\underline{\mu}}^{\underline{\gamma
}\underline{\nu}}(\underline{z}^{(l)},\underline{\varphi}^{(l)},\underline
{z}^{(l+1)},\underline{\varphi}^{(l+1)})\Gamma_{(\sigma)}^{\underline{\mu
}l\cdots1}\\
&  =\prod_{i=N-l}^{N-1}\frac{1}{b(\varphi_{i}-\underline{z}^{(l+1)})}%
\frac{\tilde{b}(\underline{z}^{(l)}-\underline{\varphi}^{(l+1)})b(i\pi
-\omega+\underline{z}^{(l+1)})}{a(\underline{\varphi}^{(l)}-\underline
{z}^{(l+1)})}\Gamma_{(\varsigma)}^{\underline{\nu}l+1\cdots1}\Phi^{(l)}%
{}_{\underline{\beta}(\sigma)}^{\underline{\gamma}(\varsigma)}(\underline
{z}^{(l)},\omega,\underline{z}^{(l+1)})
\end{align*}
such that%
\begin{multline*}
\tilde{h}(\underline{z}^{(l)},\underline{\varphi}^{(l)},\underline{z}%
^{(l+1)},\underline{\varphi}^{(l+1)})\tilde{\Phi}^{(l)}{}_{\underline{\beta
}\underline{\mu}}^{\underline{\gamma}\underline{\nu}}(\underline{z}%
^{(l)},\underline{\varphi}^{(l)},\underline{z}^{(l+1)},\underline{\varphi
}^{(l+1)})\Gamma_{(\sigma)}^{\underline{\mu}l\cdots1}\\
=\frac{\tilde{\phi}(\underline{\varphi}^{(l+1)}+i\eta-\underline{z}^{(l)}%
)}{\tilde{\phi}(\underline{\varphi}^{(l+2)}+i\eta-\underline{z}^{(l+1)}%
)}\tilde{h}(\underline{z}^{(l)},\underline{z}^{(l+1)})\Gamma_{(\varsigma
)}^{\underline{\nu}l+1\cdots1}\Phi^{(l)}{}_{\underline{\beta}(\sigma
)}^{\underline{\gamma}(\varsigma)}(\underline{z}^{(l)},\omega,\underline
{z}^{(l+1)})
\end{multline*}
and
\[
L_{\underline{\beta}(\sigma)}^{(l)}(\underline{z}^{(l)},\omega)=\int
d\underline{z}^{(l+1)}\tilde{h}(\underline{z}^{(l)},\underline{z}%
^{(l+1)})p(\underline{z}^{(l)},\underline{z}^{(l+1)})\tilde{\Psi}%
_{\underline{\beta}(\sigma)}^{(l)}(\underline{z}^{(l)},\omega,\underline
{z}^{(l+1)})
\]%
\begin{align*}
\tilde{\Psi}_{\underline{\beta}(\sigma)}^{(l)}(\underline{z}^{(l)}%
,\omega,\underline{z}^{(l+1)})  &  =L_{\underline{\gamma}(\varsigma)}%
^{(l+1)}(\underline{z}^{(l+1)},\omega)\tilde{\Phi}^{(l)}{}_{\underline{\beta
}(\sigma)}^{\underline{\gamma}(\varsigma)}(\underline{z}^{(l)},\omega
,\underline{z}^{(l+1)})\\
\,\text{with }l  &  <\gamma_{i},~\varsigma_{1}=1,\dots,\varsigma
_{l}=l<\varsigma_{l+1}<\dots<\varsigma_{N-1}\,.
\end{align*}
Note that for $l=N-3$ the relation $\tilde{\phi}(\underline{\varphi}%
^{(l+2)}+i\eta-\underline{z}^{(l+1)})=1$ holds because $\underline{\varphi
}^{(N-1)}=\emptyset$. For the highest level $l=N-2$ we have%
\begin{align*}
\tilde{h}(\underline{z}^{(N-2)},\varphi_{1},\underline{z}^{(N-1)})  &
=\tilde{h}(\underline{z}^{(N-2)},\underline{z}^{(N-1)})\tilde{\phi}%
(\varphi_{1}-\underline{z}^{(N-1)})\\
\tilde{\Phi}^{(N-2)}{}_{\underline{\beta}\mu}(\underline{z},\varphi
_{1},\underline{u})\Gamma_{(\sigma)}^{\mu N-2\cdots1}  &  =\Gamma
_{(\varsigma)}^{N-1\cdots1}\tilde{\Phi}^{(N-2)}{}_{\underline{\beta}(\sigma
)}^{(\varsigma)}(\underline{z}^{(N-2)},\omega,\underline{z}^{(N-1)})
\end{align*}
such that%
\begin{align*}
&  \tilde{h}(\underline{z}^{(N-2)},\varphi_{1},\underline{z}^{(N-1)}%
)\tilde{\Phi}^{(N-2)}{}_{\underline{\beta}\mu}(\underline{z},\varphi
_{1},\underline{u})\Gamma_{(\sigma)}^{\mu N-2\cdots1}\\
&  =\tilde{h}(\underline{z}^{(N-2)},\underline{z}^{(N-1)})\tilde{\chi}%
(\omega-\underline{z}^{(N-1)})\Gamma_{(\eta)}^{N-1\cdots1}\tilde{\Phi}%
^{(N-2)}{}_{\underline{\beta}(\sigma)}^{(\eta)}(\underline{z}^{(N-2)}%
,\omega,\underline{z}^{(N-1)})
\end{align*}
and%
\[
L_{\underline{\beta}(\sigma)}^{(N-2)}(\underline{z}^{(l)},\omega)=\int
d\underline{z}^{(N-1)}\tilde{h}(\underline{z}^{(N-2)},\underline{z}%
^{(N-1)})p(\underline{z}^{(N-2)},\underline{z}^{(N-1)})\tilde{\Psi
}_{\underline{\beta}(\sigma)}^{(N-2)}(\underline{z}^{(N-2)},\omega
,\underline{z}^{(N-1)})
\]%
\begin{align*}
\tilde{\Psi}_{\underline{\beta}(\sigma)}^{(N-2)}(\underline{z}^{(N-2)}%
,\omega,\underline{z}^{(N-1)})  &  =L^{(N-1)}(\omega-\underline{z}%
^{(N-1)})\tilde{\Phi}^{(N-2)}{}_{\underline{\beta}(\sigma)}^{\underline
{\gamma}(\eta)}(\underline{z}^{(l)},\omega,\underline{z}^{(l+1)})\\
\,\text{with }\gamma_{i}  &  =N,~\sigma_{1}=1,\dots,\sigma_{N-1}=N-1\,.
\end{align*}
Here
\[
L^{(N-1)}(\omega)=\tilde{\chi}(\omega)=\tilde{\phi}(\varphi_{1})=\Gamma\left(
\frac{1}{2}+\frac{\omega}{2\pi i}\right)  \Gamma\left(  \frac{1}{2}-\frac
{1}{N}-\frac{\omega}{2\pi i}\right)  \,.
\]

Finally we combine the minimal form factors in formula (\ref{3.8}) for
$\varphi_{N-1N-2}=\dots=\varphi_{21}=i\eta$
\begin{align*}
F_{\underline{\alpha}(\rho)}(\underline{\theta},\omega)  &  =F_{\underline
{\alpha}\underline{\lambda}}(\underline{\theta}\underline{\varphi}%
)\Gamma_{(\rho)}^{\underline{\lambda}}=F(\underline{\theta})G(\underline
{\theta}-\omega)\frac{1}{\tilde{\phi}(\underline{\varphi}+i\eta-\underline
{\theta})}K_{\underline{\alpha}\underline{\lambda}}(\underline{\theta
}\underline{\varphi})\Gamma_{(\rho)}^{\underline{\lambda}}\\
&  =F(\underline{\theta})G(\underline{\theta}-\omega)K_{\underline{\alpha
}(\rho)}(\underline{\theta},\omega)\,.
\end{align*}
The relations (\ref{2.30}) for the minimal form factor function $G$ for one
particle of rank 1 and one of rank $N-1$ and (\ref{2.32}) have been used.
Therefore the final result is%
\begin{align*}
K_{\underline{\alpha}(\rho)}^{\mathcal{O}}(\underline{\theta},\omega)  &
=\int d\underline{z}^{(1)}\dots\int d\underline{z}^{(N-1)}\tilde
{h}\,(\underline{\theta},\omega,\underline{\underline{z}})p^{\mathcal{O}%
}(\underline{\theta},\omega,\underline{\underline{z}})\tilde{\Phi}%
_{\underline{\alpha}(\rho)}(\underline{\theta},\omega,\underline{\underline
{z}})\,\\
\tilde{h}\,(\underline{\theta},\omega,\underline{\underline{z}})  &
=\prod_{l=0}^{N-2}\tilde{h}(\underline{z}^{(l)},\underline{z}^{(l+1)}%
)\prod_{i=1}^{n_{N-1}}\tilde{\chi}(\omega-z_{i}^{(N-1)})\\
p^{\mathcal{O}}(\underline{\theta},\omega,\underline{\underline{z}})  &
=p^{\mathcal{O}}(\underline{\theta}\underline{\varphi},\underline
{\underline{z}}\underline{\underline{y}})~\text{with }\underline{y}%
^{(l)}=\underline{\varphi}^{(l)}%
\end{align*}
where $p^{\mathcal{O}}(\underline{\theta}\underline{\varphi},\underline
{\underline{z}}\underline{\underline{y}})$ is the p-function for particles of
rank 1 only. The complete Bethe ansatz state is%
\[
\tilde{\Phi}_{\underline{\alpha}(\rho)}(\underline{\theta},\omega
,\underline{\underline{z}})=\tilde{\Phi}_{\underline{\varsigma}(\lambda
)}^{(N-2)(\eta)}(\underline{z}^{(N-2)},\omega,\underline{z}^{(N-1)}%
)\dots\tilde{\Phi}_{\underline{\beta}(\sigma)}^{(1)\underline{\gamma}(\kappa
)}(\underline{z}^{(1)},\omega,\underline{z}^{(2)})\tilde{\Phi}_{\underline
{\alpha}(\rho)}^{\underline{\beta}(\sigma)}(\underline{\theta},\omega
,\underline{z}^{(1)})
\]
where $(\eta)$ is the highest weight bound state $(\eta)=(1,2,\dots,N-1)$.

\paragraph{The energy momentum tensor:}

We apply the results above to the example of the energy momentum tensor and
prove (\ref{3.28}). In this case $n=\bar{n}=1,$ and the p-function is that of
(\ref{3.22})%
\[
p^{T^{\rho\sigma}}(\theta,\omega,\underline{z})=\left(  e^{\rho\theta}%
+e^{\rho\omega}\right)  e^{\sigma z}\,.
\]

\begin{lemma}
The functions $L_{\beta(\mu)}^{(l)}(z,\omega)$ (for all $l=1,\dots,N-3$) are
explicitly given as%
\begin{align}
L_{\beta(\mu)}^{(l)}(z,\omega)  &  =\epsilon_{\beta(\mu)}L^{(l)}%
(\omega-z)~~\text{with }\beta>l,~(\mu)=(1,2,\dots,l,\ast,\dots,\ast
)\nonumber\\
L^{(l)}(\omega-z)  &  =c_{l}\ \Gamma\left(  \frac{1}{2}+\frac{\omega-z}{2\pi
i}\right)  \Gamma\left(  -\frac{1}{2}+\frac{l}{N}-\frac{\omega-z}{2\pi
i}\right)  \label{A.6}%
\end{align}

\end{lemma}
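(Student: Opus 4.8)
The plan is to prove (\ref{A.6}) by downward induction on the nesting level $l$. In the case $n=\bar n=1$ treated here, (\ref{2.24}) leaves exactly one integration variable $z'=z^{(l+1)}$ at each level, so that the scalar function $\tilde h(z,z')=\tilde\phi(z-z')$ of (\ref{3.6}) carries no $\tau$-factors and the level-$l$ Bethe covector $\tilde\Phi^{(l)}$ has a single component; the recursion assembled in this appendix then reads
\[
L_{\beta(\sigma)}^{(l)}(z,\omega)=\int_{\mathcal C}dz'\;\tilde\phi(z-z')\,p(z,z')\,L_{\gamma(\varsigma)}^{(l+1)}(z',\omega)\,\tilde\Phi^{(l)}{}_{\beta(\sigma)}^{\gamma(\varsigma)}(z,\omega,z')\,,
\]
with $\mathcal C$ the contour of type (\ref{2.16}) and $p$ the p-function inherited by the bound-state procedure from (\ref{3.22}). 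The base of the induction is $l=N-2$: one inserts the explicit $L^{(N-1)}(\omega-z)=\tilde\chi(\omega-z)=\Gamma\!\big(\tfrac12+\tfrac{\omega-z}{2\pi i}\big)\Gamma\!\big(\tfrac12-\tfrac1N-\tfrac{\omega-z}{2\pi i}\big)$ computed above and performs the single integral; iterating downward then produces (\ref{A.6}) for all $l=1,\dots,N-3$ as claimed.

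For the tensorial content I would show that the $\epsilon$-structure is self-reproducing: assuming $L_{\gamma(\varsigma)}^{(l+1)}=\epsilon_{\gamma(\varsigma)}L^{(l+1)}$ with $(\varsigma)=(1,\dots,l+1,\ast,\dots,\ast)$, contracting with $\tilde\Phi^{(l)}{}_{\beta(\sigma)}^{\gamma(\varsigma)}$ — which at this level is a single lower component of the level-$l$ monodromy matrix acting on the reference covector, built from the S-matrix data (\ref{2.12}) — produces $\epsilon_{\beta(\mu)}$ with $(\mu)=(1,\dots,l,\ast,\dots,\ast)$ times a scalar built from $\tilde b$-amplitudes. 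This is the level-by-level version of the fusion relation (\ref{A.2}) and is purely combinatorial. Factoring the $\epsilon$-tensor out of both sides leaves a scalar recursion
\[
L^{(l)}(\omega-z)=c'_l\int_{\mathcal C}\frac{dz'}{2\pi i}\;\tilde\phi(z-z')\,R_l(z,z',\omega)\,L^{(l+1)}(\omega-z')\,,
\]
where $R_l$ gathers the rational $\tilde b$-, $a$- and $b$-factors from $\tilde\Phi^{(l)}$ together with the telescoping $\tilde\phi$-ratios of this appendix; because all of these enter only through $\varphi_j=\omega+ji\eta-i\pi$, the result indeed depends on $\omega-z$ alone.

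The heart of the argument — and the step I expect to be the main obstacle — is the evaluation of this one-dimensional integral. Writing $\tilde\phi(\theta)=\Gamma(-\tfrac{\theta}{2\pi i})\Gamma(1-\tfrac1N+\tfrac{\theta}{2\pi i})$ throughout and substituting $u=\tfrac{\omega-z'}{2\pi i}$, the integrand becomes a product of Gamma functions linear in $u$, with the contour $\mathcal C$ of (\ref{2.16}) arranged exactly so as to separate the upward from the downward families of poles, as in the definition of Meijer's G-function. Closing the contour and resumming the resulting infinite residue series by Gauss's ${}_2F_1(1)$ summation — equivalently, the first Barnes lemma combined with the $\Gamma$-reflection formula — collapses the integrand to the two Gammas of (\ref{A.6}), the second argument being shifted from $-\tfrac12+\tfrac{l+1}N$ down to $-\tfrac12+\tfrac lN$ by one unit $\eta/2\pi=1/N$. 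The delicate points are keeping track of which poles lie above or below $\mathcal C$ so that no spurious pinching occurs, and absorbing the $N$-dependent prefactors into the constants $c_l$, which are then fixed a posteriori by the physical normalisation; once (\ref{A.6}) is known for all $l$, the remaining outermost integral in (\ref{3.24}) is a single integral of the same type and produces (\ref{3.28}).
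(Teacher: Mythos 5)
Your proposal is correct and follows essentially the same route as the paper: downward induction starting from the explicit $L^{(N-1)}(\omega-z)=\tilde{\chi}(\omega-z)$, showing that contracting the level-$l$ Bethe covector with $\epsilon_{\gamma(\nu)}$ reproduces $\epsilon_{\beta(\mu)}$ times a rational combination of $\tilde{b},\tilde{c},\tilde{d}$ amplitudes, and then evaluating the resulting one-dimensional Gamma-function integral by Barnes' first lemma (which the paper quotes explicitly, and which you correctly identify as equivalent to the ${}_2F_1(1)$ resummation). The only cosmetic difference is that the paper splits the integrand into the two pieces $I_{1}$ (the $\tilde{b}\tilde{d}$ term) and $I_{2}$ (the $\tilde{c}$ term) before applying the lemma, which is exactly the bookkeeping you describe as "gathering the rational factors".
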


\begin{proof}
Again some equations are given up to unessential constants. We use induction,
start with%
\[
L^{(N-1)}(z,\omega)=\epsilon_{N(1\dots N-1)}\tilde{\chi}(\omega-z)=(-1)^{N-1}%
\Gamma\left(  \frac{1}{2}+\frac{\omega-z}{2\pi i}\right)  \Gamma\left(
\frac{1}{2}-\frac{1}{N}-\frac{\omega-z}{2\pi i}\right)  \,,
\]
and then calculate iteratively for $l=N-1,\dots,2$ the integrals%
\begin{align*}
L_{\beta(\mu)}^{(l-1)}(z,\omega)  &  =\int_{\mathcal{C}_{z\omega}}%
du\,\tilde{\phi}(z-u)L^{(l)}(\omega-u)\epsilon_{\gamma(\nu)}{\tilde{\Phi}%
}_{\beta(\mu)}^{(l-1)\gamma(\nu)}(z,\omega,u)\\
\epsilon_{\gamma(\nu)}{\tilde{\Phi}}_{\beta(\mu)}^{(l-1)\gamma(\nu)}%
(z,\omega,u)  &  =\epsilon_{\gamma(\nu)}\tilde{S}_{\beta\delta}^{\gamma
l}(z-u)\tilde{S}_{(\mu)l}^{\delta(\nu)}(\omega-u)\\
&  =\epsilon_{\beta(\mu)}\left(  \left(  N-l\right)  \delta_{\beta}^{l}%
\tilde{b}(z-u)\tilde{d}(\omega-u)+\delta_{\beta}^{>l}\tilde{c}(z-u)\right)
\end{align*}
where $\delta_{\beta}^{>l}=1$ for $\beta>l$ and $0$ else. Both integrals
\begin{align*}
I_{1}  &  =\int_{\mathcal{C}_{z}\omega}du\tilde{\phi}(z-u)L^{(l)}%
(\omega-u)(N-l)\tilde{b}(z-u)\tilde{d}(\omega-u)\\
I_{2}  &  =\int_{\mathcal{C}_{z}\omega}du\tilde{\phi}(z-u)L^{(l)}%
(\omega-u)\tilde{c}(z-u)
\end{align*}
can be calculated by means of the formula%
\begin{multline*}
\int_{-\infty}^{\infty}dz\Gamma\left(  a+\frac{z}{2\pi i}\right)
\Gamma\left(  b+\frac{z}{2\pi i}\right)  \Gamma\left(  c-\frac{z}{2\pi
i}\right)  \Gamma\left(  d-\frac{z}{2\pi i}\right) \\
=\left(  2\pi\right)  ^{2}\frac{\Gamma\left(  a+c\right)  \Gamma\left(
a+d\right)  \Gamma\left(  b+c\right)  \Gamma\left(  b+d\right)  }%
{\Gamma\left(  a+b+c+d\right)  }%
\end{multline*}
and yield the result (\ref{A.6}).
\end{proof}

Finally we have to calculate
\begin{align}
K_{\alpha(\lambda)}(\theta,\omega)  &  =\left(  e^{\rho\theta}+e^{\rho\omega
}\right)  \int_{\mathcal{C}_{\theta\omega}}dz\,\tilde{\phi}(\theta
-z)L^{(1)}(\omega-z)e^{\sigma z}\epsilon_{\delta(\mu)}{\tilde{\Phi}}%
_{\alpha(\lambda)}^{\delta(\mu)}(\theta,\omega,z)\,\nonumber\\
\epsilon_{\delta(\mu)}{\tilde{\Phi}}_{\alpha(\lambda)}^{\delta(\mu)}%
(\theta,\omega,z)\,  &  =\epsilon_{\alpha(\lambda)}\left(  \left(  N-1\right)
\delta_{\alpha}^{1}\tilde{b}(\theta-z)\tilde{d}(\omega-z)+\delta_{\alpha}%
^{>1}\tilde{c}(\theta-z)\right) \nonumber
\end{align}
which yields the result (\ref{3.28}) using the formula%
\[
\int_{\mathcal{C}}\left(  \Gamma(a+\frac{z}{2\pi i})\Gamma(b+\frac{z}{2\pi
i})\Gamma(c-\frac{z}{2\pi i})\Gamma(d-\frac{z}{2\pi i})\right)  e^{\sigma
z}dz=\frac{\sigma\left(  2\pi i\right)  ^{3}}{ab-cd}\exp\left(  \sigma
i\pi\left(  c+d\right)  \right)
\]
for $a+b+c+d=0$.

\section{Commutation rules\label{sc}}

%\label{ac}
In this appendix we use the short notation for form factors i.e. matrix
elements of the field $\psi(x)$ at $x=0$%
\begin{equation}
\psi_{\underline{\alpha}}^{\underline{\beta}}(\underline{\theta}_{\beta
}^{\prime},\underline{\theta}_{\alpha})=F^{\psi}\,_{\underline{\alpha}%
}^{\underline{\beta}}(\underline{\theta}_{\beta}^{\prime},\underline{\theta
}_{\alpha})=\,^{\underline{\beta},out}\langle\,\underline{\theta}_{\beta
}^{\prime}|\,\psi(0)|\,\underline{\theta}_{\alpha}\,\rangle_{\underline
{\alpha}}^{in}\,. \label{B.2}%
\end{equation}
To proof the general commutation rules of fields (\ref{5.2}) we have to
consider $SU(N)$ sum rules and general crossing relations.

\subsection{$SU(N)$ sum rules}

\subparagraph{Particles and anti-particles:}

Let $(\alpha)=(\alpha_{1},\dots,\alpha_{r}),~(1\leq\alpha_{1}<\dots<\alpha
_{r}\leq N)$ a particle of rank (and charge) $r$. The corresponding
anti-particle is $(\bar{\alpha})=(\bar{\alpha}_{1},\dots,\bar{\alpha}_{N-r})$,
$(1\leq\bar{\alpha}_{1}<\dots<\bar{\alpha}_{N-r}\leq N)$ (of rank $N-r$) such
that the union of the set of indices satisfies $\{\alpha_{1},\dots,\alpha
_{r}\}\cup\{\bar{\alpha}_{1},\dots,\bar{\alpha}_{N-r}\}=\{1,\dots,N\}$.
Therefore
\begin{equation}
\sum_{k=1}^{r}\alpha_{k}+\sum_{k=1}^{N-r}\bar{\alpha}_{k}=\sum_{k=1}%
^{N}k=\frac{1}{2}N\left(  N+1\right)  . \label{B.4}%
\end{equation}

\subparagraph{Charges:}

Let $\underline{\alpha}=((\alpha_{11},\dots,\alpha_{1r_{1}}),\dots
,(\alpha_{\alpha1},\dots,\alpha_{\alpha r_{\alpha}}))$ be a state of $\alpha$
particles of rank $r_{1},\dots,r_{\alpha}~(1\leq r_{j}\leq N-1)$ (or bound
states of $r_{j}$ particles of rank 1). We define the charge of a state as the
sum of all ranks of the particles in the state $\underline{\alpha}$%
\[
Q_{\alpha}=\sum_{j=1}^{\alpha}r_{j}\,.
\]
The charge of anti-particles (bound states) we define as
\begin{align*}
Q_{\bar{\alpha}}  &  =\sum_{j=1}^{\alpha}r_{j}\left(  N-1\right)  =\left(
N-1\right)  Q_{\alpha}\\
Q_{\alpha}+Q_{\bar{\alpha}}  &  =NQ_{\alpha}\,.
\end{align*}

\subparagraph{Weights:}

Let $\underline{\alpha}=((\alpha_{11},\dots,\alpha_{1r_{1}}),\dots
,(\alpha_{\alpha1},\dots,\alpha_{\alpha r_{\alpha}}))$ be a state of $\alpha$
particles. The weight $w_{i}(\underline{\alpha})$ of the state $\underline
{\alpha}$ is equal to the number of $\alpha_{jk}=i,~(1\leq i\leq N)$
\[
w_{i}(\underline{\alpha})=\sum_{j=1}^{\alpha}\sum_{k=1}^{r_{j}}\delta
_{i\alpha_{jk}}\,,~(1\leq i\leq N).
\]
Therefore the total charge of the state $\underline{\alpha}$ is%
\[
Q_{\alpha}=\sum_{j=1}^{\alpha}r_{j}=\sum_{j=1}^{\alpha}\sum_{k=1}^{r_{j}%
}1=\sum_{j=1}^{\alpha}\sum_{k=1}^{r_{j}}\sum_{i=1}^{N}\delta_{i\alpha_{jk}%
}=\sum_{i=1}^{N}w_{i}(\underline{\alpha})
\]
Similarly, we consider $\underline{\gamma}=((\gamma_{11},\dots,\gamma_{1s_{1}%
}),\dots,(\gamma_{\gamma1},\dots,\gamma_{\gamma s_{\gamma}}))$.

\subparagraph{Sum rules:}

Because of $SU(N)$ invariance%
\[
\psi_{\underline{\alpha}}^{\underline{\gamma}}(\underline{\theta}_{\gamma
}^{\prime},\underline{\theta}_{\alpha})=\,^{\underline{\gamma},out}%
\langle\,\underline{\theta}_{\gamma}^{\prime}|\,\psi(0)|\,\underline{\theta
}_{\alpha}\,\rangle_{\underline{\alpha}}^{in}\neq0
\]
(or $\psi_{\underline{\alpha}\underline{\bar{\gamma}}}\neq0$) implies for the
weights
\[
w(\underline{\alpha})=w(\underline{\gamma})+w^{\psi}+L\left(  1,\dots
,1\right)  ,~L\in\mathbb{Z}%
\]
where $w^{\psi}$ is the weight vector of the operator $\psi$ and $\left(
1,\dots,1\right)  $ are weights of a state in the vacuum sector. Therefore%
\[
Q_{\alpha}=\sum_{i=1}^{N}w_{i}(\underline{\alpha})=Q_{\gamma}+\sum_{i=1}%
^{N}w_{i}^{\psi}+NL\,.
\]
The charge of the operator $\psi$ is defined by%
\begin{align}
Q_{\psi}  &  =\left(  Q_{\alpha}-Q_{\gamma}\right)  \operatorname{mod}N,~0\leq
Q_{\psi}<N\label{B.6}\\
&  =\sum_{i=1}^{N}w_{i}^{\psi}\operatorname{mod}N\,.\nonumber
\end{align}
For a particle $(\alpha_{j1},\dots,\alpha_{jr_{j}})$ of rank $r_{j}$ we use
the short notation $(\alpha_{j})=(\alpha_{j1},\dots,\alpha_{jr_{j}})$ and
$\alpha_{j}=\sum_{k=1}^{r_{j}}\alpha_{jk}$ and correspondingly, $\gamma
_{j}=\sum_{k=1}^{s_{j}}\gamma_{jk}$. Then $SU(N)$ invariance implies%
\begin{align}
\sum_{j=1}^{\alpha}\alpha_{j}-\sum_{j=1}^{\gamma}\gamma_{j}  &  =\tfrac{1}%
{2}(N+1)\left(  Q_{\alpha}-Q_{\gamma}-Q_{\psi}\right)  +R_{\psi}\nonumber\\
\text{with~~~~~~}R_{\psi}  &  =\sum_{i=1}^{N}i\,w_{i}^{\psi}-\tfrac{1}%
{2}(N+1)\left(  \sum_{i=1}^{N}w_{i}^{\psi}-Q_{\psi}\right)  \label{B.8}%
\end{align}
which can be straightforwardly proved using the above definitions.

Examples:%
\[%
\begin{array}
[c]{llll}%
T^{\mu\nu}: & w^{T}=\left(  0,0,\dots,0\right)  & Q_{T}=0 & R_{T}=0\\
\psi_{\alpha}: & w^{\psi}=\left(  1,0,\dots,0\right)  & Q_{\psi}=1 & R_{\psi
}=1\\
j^{\mu\nu}: & w^{J}=\left(  2,1,\dots,1,0\right)  & Q_{j}=0 & R_{j}=1-N
\end{array}
\]

\subsection{Crossing}

\subsubsection{A partial S-matrix}

\begin{definition}
\textbf{ } Let $\underline{\theta}_{\beta}=(\theta_{\pi(1)},\dots,\theta
_{\pi(\alpha)})$ be a permutation of $\underline{\theta}_{\alpha}=(\theta
_{1},\dots,\theta_{\alpha})$. Then $S_{\underline{\alpha}}^{\underline{\beta}%
}\left(  \underline{\theta}_{\beta};\underline{\theta}_{\alpha}\right)  $ is
the matrix representation of the permutation group $\mathcal{S}_{\alpha}$
generated by the simple transpositions $\sigma_{ij}:i\leftrightarrow j$ for
any pair of nearest neighbor indices $1\leq i,j=i+1\leq\alpha$
as\footnote{Note that this definition is quite analogous to that of
representations of the braid group by means of spectral parameter independent
R-matrices.}
\[
\sigma_{ij}\rightarrow S(\theta_{ij})\,
\]
Because of the Yang-Baxter relation and unitarity of the S-matrix the
representation is well defined. We will also use the notation
\[
S_{\underline{\alpha}}^{\underline{\mu}\underline{\lambda}}\left(
\underline{\theta}_{\mu}\underline{\theta}_{\lambda};\underline{\theta
}_{\alpha}\right)
\]
if $\pi$ is that permutation which reorders the array $\underline{\theta
}_{\alpha}$ such that it coincides with the combined arrays of $\underline
{\theta}_{\mu}$ and $\underline{\theta}_{\lambda}$.
\end{definition}

As an example consider the case $\underline{\theta}_{\alpha}=(\theta
_{1},\theta_{2},\theta_{3},\theta_{4}),~\underline{\theta}_{\mu}=(\theta
_{2},\theta_{3})$ and $\underline{\theta}_{\lambda}=(\theta_{1},\theta_{4})$
\begin{align*}
S_{\underline{\alpha}}^{\underline{\mu}\underline{\lambda}}(\theta_{2}%
\theta_{3}\theta_{1}\theta_{4};\theta_{1}\theta_{2}\theta_{3}\theta_{4})  &
=S_{\alpha_{1}^{\prime}\alpha_{3}}^{\mu_{2}\lambda_{1}}(\theta_{13}%
)S_{\alpha_{1}\alpha_{2}}^{\mu_{1}\alpha_{1}^{\prime}}(\theta_{12}%
)\delta_{\alpha_{4}}^{\lambda_{2}}\\%
\begin{array}
[c]{l}%
\unitlength6mm\begin{picture}(4,4) \put(0,1.5){\framebox(4,1){$S$}}
\put(3.5,.7){\line(0,1){.8}} \put(3.5,2.5){\line(0,1){.8}}
\put(.5,.7){\line(0,1){.8}} \put(.5,2.5){\line(0,1){.8}}
\put(1.5,.7){\line(0,1){.8}} \put(1.5,2.5){\line(0,1){.8}}
\put(2.5,.7){\line(0,1){.8}} \put(2.5,2.5){\line(0,1){.8}}
\put(.3,.2){$\alpha_1$} \put(.6,.8){$\theta_1$} \put(1.3,.2){$\alpha_2$}
\put(1.6,.8){$\theta_2$} \put(2.3,.2){$\alpha_3$} \put(2.6,.8){$\theta_3$}
\put(3.3,.2){$\alpha_4$} \put(3.6,.8){$\theta_4$} \put(.6,2.7){$\theta_2$}
\put(1.6,2.7){$\theta_3$} \put(2.6,2.7){$\theta_1$} \put(3.6,2.7){$\theta_4$}
\put(.3,3.5){$\mu_1$} \put(1.3,3.5){$\mu_2$} \put(2.3,3.5){$\lambda_1$}
\put(3.3,3.5){$\lambda_1$}
\end{picture}
\end{array}
&  =%
\begin{array}
[c]{l}%
\unitlength6mm\begin{picture}(4,3) \put(0,1){\line(2,1){2}}
\put(1,1){\line(-1,1){1}} \put(2,1){\line(-1,1){1}} \put(3,1){\line(0,1){1}}
\put(-.2,.3){$\alpha_1$} \put(0.8,.3){$\alpha_2$} \put(1.8,.3){$\alpha_3$}
\put(2.8,.3){$\alpha_4$} \put(-.2,2.4){$\mu_1$} \put(0.8,2.4){$\mu_2$}
\put(1.8,2.4){$\lambda_1$} \put(2.8,2.4){$\lambda_2$}
\end{picture}
\end{array}
\end{align*}
If the permutation inverts the rapidities completely $S_{\underline{\alpha}%
}^{\underline{\beta}}\left(  \underline{\theta}_{\beta};\underline{\theta
}_{\alpha}\right)  =S_{\underline{\alpha}}^{\underline{\beta}}(\underline
{\theta}_{\alpha})$ is the full S-matrix.

\subsubsection{Crossing for SU(N)}

As was argued by Swieca et al.~\cite{KKS} the particles of the chiral $SU(N)$
Gross-Neveu model posses anyonic statistics and due to the unusual crossing
property of the S-matrix, Klein factors are needed. The crossing relations of
the form factors for normal fields and particles were derived in \cite{BK} by
means of LSZ-assumptions and maximal analyticity. They have to be modified for
the chiral $SU(N)$ Gross-Neveu model.

We propose crossing relations%
\begin{align}
\psi_{\underline{\alpha}}^{\underline{\gamma}}(\underline{\theta}_{\gamma
}^{\prime};\underline{\theta}_{\alpha})  &  =\sigma_{(\gamma)}^{\psi}\sum_{%
%TCIMACRO{\QTATOP{\theta_{\eta}\cup\theta_{\nu}=\theta_{\gamma}}{\theta
%_{\lambda}\cup\theta_{\mu}=\theta_{\alpha}}}%
%BeginExpansion
\genfrac{}{}{0pt}{1}{\theta_{\eta}\cup\theta_{\nu}=\theta_{\gamma}%
}{\theta_{\lambda}\cup\theta_{\mu}=\theta_{\alpha}}%
%EndExpansion
}\zeta_{(\gamma,\alpha,\eta)}^{\psi}S_{\underline{\nu}\underline{\eta}%
}^{\underline{\gamma}}(\underline{\theta}_{\gamma};\underline{\theta}_{\nu
}\underline{\theta}_{\eta})\,\mathbf{1}_{\underline{\mu}}^{\underline{\nu}%
}\,\mathbf{C}^{\underline{\eta}\underline{\bar{\eta}}}\psi_{\underline
{\bar{\eta}}\underline{\lambda}}(\underline{\theta}_{\bar{\eta}}^{\prime}%
+i\pi_{-},\underline{\theta}_{\lambda})S_{\underline{\alpha}}^{\underline{\mu
}\underline{\lambda}}\left(  \underline{\theta}_{\mu}\underline{\theta
}_{\lambda};\underline{\theta}_{\alpha}\right) \label{B.10}\\
&  =\sum_{%
%TCIMACRO{\QTATOP{\theta_{\eta}\cup\theta_{\nu}=\theta_{\gamma}}{\theta
%_{\lambda}\cup\theta_{\mu}=\theta_{\alpha}}}%
%BeginExpansion
\genfrac{}{}{0pt}{1}{\theta_{\eta}\cup\theta_{\nu}=\theta_{\gamma}%
}{\theta_{\lambda}\cup\theta_{\mu}=\theta_{\alpha}}%
%EndExpansion
}\xi_{(\gamma,\alpha,\eta)}^{\psi}S_{\underline{\eta}\underline{\nu}%
}^{\underline{\gamma}}(\underline{\theta}_{\gamma};\underline{\theta}_{\eta
}\underline{\theta}_{\nu})\,\mathbf{C}^{\underline{\bar{\eta}}\underline{\eta
}}\psi_{\underline{\lambda}\underline{\bar{\eta}}}\,\mathbf{1}_{\underline
{\mu}}^{\underline{\nu}}S_{\underline{\alpha}}^{\underline{\lambda}%
\underline{\mu}}\left(  \underline{\theta}_{\lambda}\underline{\theta}_{\mu
};\underline{\theta}_{\alpha}\right)  \label{B.12}%
\end{align}
which, compared to the formulae in \cite{BK}, are modified by the factors
$\zeta_{(\gamma,\alpha,\eta)}^{\psi}$ and $\xi_{(\gamma,\alpha,\eta)}^{\psi}$%
\begin{align*}
\zeta_{(\gamma,\alpha,\eta)}^{\psi}  &  =\rho_{(\gamma,\alpha)}^{\psi}%
e^{i\pi\left(  N-1\right)  \frac{1}{2}Q_{\eta}\left(  Q_{\eta}+N\right)
}e^{\frac{2\pi i}{N}\left(  R_{\psi}Q_{\alpha}-Q_{\psi}\sum\bar{\gamma}%
_{j}\right)  }\\
\xi_{(\gamma,\alpha,\eta)}^{\psi}  &  =e^{i\pi\left(  N-1\right)  \left(
\frac{1}{2}Q_{\eta}\left(  Q_{\eta}+N\right)  +Q_{\nu}Q_{\psi}\right)
}e^{\frac{2\pi i}{N}\left(  R_{\psi}Q_{\alpha}-Q_{\psi}\sum\bar{\gamma}%
_{j}\right)  }\\
\rho_{(\gamma,\alpha)}^{\psi}  &  =\left(  -1\right)  ^{\left(
N-1+(1-1/N)\left(  Q_{\alpha}+Q_{\bar{\gamma}}-Q_{\psi}\right)  \right)
Q_{\bar{\gamma}}}\\
\sigma_{(\gamma)}^{\psi}  &  =e^{i\pi(1-1/N)Q_{\psi}Q_{\bar{\gamma}}}%
\end{align*}
with $\bar{\gamma}_{j}=\frac{1}{2}N(N+1)-\gamma_{j}$, due to (\ref{B.4}). The
sign factor $\rho_{(\gamma,\alpha)}^{\psi}$ and the statistics factor
$\sigma_{(\gamma)}^{\psi}$ were introduced in \cite{BFK1}. The charge
$Q_{\psi}$ of the operator $\psi$ and the number $R_{\psi}$ are defined in
(\ref{B.6}) and (\ref{B.8}).

\subsection{Commutation rules}

In \cite{BFK} commutation rules were derived for the $Z(N)$ scaling Ising
models. The results for the $SU(N)$ Gross-Neveu model are very similar,
however the proof is more complicated because of the unusual crossing
relations and the presence of the Klein factors.

\begin{theorem}
The equal time commutation rule of two fields $\phi(x)$ and $\psi(y)$ with
charge $Q_{\phi}$ and $Q_{\psi}$, respectively, is (in general anyonic)%
\[
\phi(x)\psi(y)=\psi(y)\phi(x)\exp\left(  2\pi i\epsilon(x^{1}-y^{1})\tfrac
{1}{2}\left(  1-1/N\right)  Q_{\phi}Q_{\psi}\right)  \,.
\]

\end{theorem}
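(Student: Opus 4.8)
The plan is to derive the commutation rule by inserting a complete set of intermediate states between the two operators and using the crossing relations (\ref{B.10})--(\ref{B.12}) together with the $SU(N)$ sum rules (\ref{B.4}), (\ref{B.6}), (\ref{B.8}) to compare the two orderings $\phi(x)\psi(y)$ and $\psi(y)\phi(x)$. Concretely, I would write
\[
\langle\,\underline{\theta}_{\gamma}^{\prime}|\,\phi(x)\psi(y)|\,\underline{\theta}_{\alpha}\,\rangle_{\underline{\alpha}}^{\underline{\gamma}}
=\sum_{\underline{\beta},\underline{\theta}_{\beta}}\langle\,\underline{\theta}_{\gamma}^{\prime}|\,\phi(x)|\,\underline{\theta}_{\beta}\,\rangle\,\langle\,\underline{\theta}_{\beta}|\,\psi(y)|\,\underline{\theta}_{\alpha}\,\rangle
\]
and the analogous expression with $\phi$ and $\psi$ interchanged, translate the $x$- and $y$-dependence into exponential factors $e^{-ix(P_{\gamma}-P_{\beta})}$ and $e^{-iy(P_{\beta}-P_{\alpha})}$, and then use the equal-time condition $x^{0}=y^{0}$ to reduce everything to a statement about the spatial momenta. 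The key point is that at equal times the difference of the two orderings is controlled by whether the momentum integrals can be closed in the upper or lower half plane, which is governed by the sign $\epsilon(x^{1}-y^{1})$; this is exactly the mechanism used in \cite{BFK} for the $Z(N)$ case.

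Next I would feed the crossing relations into each matrix element. Writing $\phi(x)\psi(y)$ with $\psi$ acting first means crossing particles from the ket of $\psi$ to become anti-particles in the bra, and similarly for $\psi(y)\phi(x)$; the two procedures differ by the order in which the crossing is performed. Each crossing step produces the statistics factor $\sigma_{(\gamma)}^{\psi}=e^{i\pi(1-1/N)Q_{\psi}Q_{\bar\gamma}}$ and the sign/phase factors $\zeta$, $\xi$, $\rho$ collected in subsection \ref{sc}. The strategy is to track how these factors recombine: most of the $N$-dependent signs $(-1)^{N-1}$ coming from the Klein factors (\ref{2.36}) and the unusual crossing (\ref{2.38}) must cancel between the two orderings, leaving only the genuine anyonic phase. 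After the dust settles, the ratio of the two orderings should be
\[
\exp\!\left(i\pi(1-1/N)Q_{\phi}Q_{\psi}\,\epsilon(x^{1}-y^{1})\right),
\]
where one factor of $Q_{\phi}Q_{\psi}$ arises because crossing a charge-$Q_{\psi}$ field past a state of charge $Q_{\phi}$ (coming ultimately from the intermediate state carrying the charge of $\phi$) produces a phase proportional to the product of the two charges times the spin $s=\tfrac12(1-1/N)$.

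The main obstacle, as the authors themselves emphasize in section \ref{s5}, is the bookkeeping of the Klein-factor-related signs and the non-trivial statistics factors $\zeta,\xi,\rho,\sigma$. Unlike the $Z(N)$ Ising case, here the crossing relations (\ref{B.10})--(\ref{B.12}) already contain the $S$-matrix reorderings $S_{\underline{\nu}\underline{\eta}}^{\underline{\gamma}}$ and $S_{\underline{\alpha}}^{\underline{\mu}\underline{\lambda}}$, and one must verify that these partial $S$-matrices drop out (using Yang-Baxter and unitarity of $\tilde S$, and the fact that the spatial momenta in the two orderings are related by a reordering) rather than contributing extra matrix structure. I would organize this by first treating the case where $\phi$ and $\psi$ each have definite charge and the intermediate state is a single multiplet, checking that the sum rules (\ref{B.8}) force the right power of $e^{2\pi i/N}$, and then arguing that the general case follows by linearity and by the multiplicativity of the crossing phases in the number of crossed particles. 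The detailed verification that $\zeta/\xi$ and the $\rho$, $\sigma$ factors conspire to give precisely $\tfrac12(1-1/N)Q_{\phi}Q_{\psi}$ is the computational heart of the proof and is what appendix \ref{sc} is devoted to.
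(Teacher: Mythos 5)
Your proposal follows essentially the same route as the paper's own proof in appendix \ref{sc}: insert a complete set of intermediate states, apply the crossing relation (\ref{B.12}) to the matrix element of $\phi$ and (\ref{B.10}) to that of $\psi$ (and vice versa for the reversed ordering), reduce the internal partial S-matrices via the product formula and form factor equation (ii), shift the remaining rapidity integrations by $i\pi_{-}$ using the decay of $e^{i\tilde{P}_{\omega}(x-y)}$ for $x^{1}<y^{1}$ at equal times, and read off the anyonic phase as the ratio of the $\sigma$, $\zeta$, $\xi$ factors. Like the paper, you defer the explicit verification that this ratio equals $e^{-2\pi i\frac{1}{2}(1-1/N)Q_{\phi}Q_{\psi}}$ (the paper calls it ``long and cumbersome but straightforward'') and the treatment of bound-state poles obstructing the contour shift (which the paper delegates to \cite{BFK}), so the plan is sound and matches the published argument.
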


\begin{proof}
We consider an arbitrary matrix element of products of fields
\[
\left(  \phi(x)\psi(y)\right)  _{\underline{\alpha}}^{\underline{\beta}%
}(\underline{\theta}_{\beta}^{\prime},\underline{\theta}_{\alpha
})=\,^{\underline{\beta},out}\langle\,\underline{\theta}_{\beta}^{\prime
}|\,\phi(x)\psi(y)|\,\underline{\theta}_{\alpha}\,\rangle_{\underline{\alpha}%
}^{in}\,.
\]
Inserting a complete set of intermediate states $|\,\underline{\tilde{\theta}%
}_{\gamma}\,\rangle_{\underline{\gamma}}^{in}$ we obtain%
\begin{equation}
\left(  \phi(x)\psi(y)\right)  _{\underline{\alpha}}^{\underline{\beta}%
}(\underline{\theta}_{\beta}^{\prime},\underline{\theta}_{\alpha
})=e^{iP_{\beta}^{\prime}x-iP_{\alpha}y}\frac{1}{\gamma!}\int_{\underline
{\tilde{\theta}}_{\gamma}}\phi_{\underline{\gamma}}^{\underline{\beta}%
}(\underline{\theta}_{\beta}^{\prime},\underline{\tilde{\theta}}_{\gamma}%
)\psi_{\underline{\alpha}}^{\underline{\gamma}}(\underline{\tilde{\theta}%
}_{\gamma},\underline{\theta}_{\alpha})e^{-i\tilde{P}_{\gamma}(x-y)}
\label{B.14}%
\end{equation}
where $P_{\alpha}=$ the total momentum of the state $|\,\underline{\theta
}_{\alpha}\,\rangle_{\underline{\alpha}}^{in}$ etc. and $\int_{\underline
{\tilde{\theta}}_{\gamma}}=\prod_{k=1}^{\gamma}\int\frac{d\tilde{\theta}_{k}%
}{4\pi}$. Einstein summation convention over all sets $\underline{\gamma}$ is
assumed. We also define $\gamma!=\prod_{r=1}^{N}n_{r}!$ where $n_{r}$ is the
number of particles of rank $r$ in $\underline{\gamma}$. We apply the general
crossing formulae (\ref{B.10},\ref{B.12}). Strictly speaking, we apply the
second version (\ref{B.12}) of the crossing formula to the matrix element of
$\phi$
\[
\phi_{\underline{\gamma}}^{\underline{\beta}}(\underline{\theta}_{\beta
}^{\prime},\underline{\tilde{\theta}}_{\gamma})=\sum_{\substack{\underline
{\theta}_{\rho}^{\prime}\cup\underline{\theta}_{\tau}^{\prime}=\underline
{\theta}_{\beta}^{\prime}\\\underline{\tilde{\theta}}_{\varsigma}%
\cup\underline{\tilde{\theta}}_{\sigma}=\underline{\tilde{\theta}}_{\gamma}%
}}\xi_{(\beta,\gamma,\rho)}^{\phi}S_{\underline{\rho}\underline{\tau}%
}^{\underline{\beta}}\,\phi_{\underline{\varsigma}\underline{\bar{\rho}}%
}(\underline{\tilde{\theta}}_{\varsigma},\underline{\theta^{\prime}}%
_{\bar{\rho}}-i\pi_{-})\mathbf{C}^{\underline{\rho}\underline{\bar{\rho}}%
}\,\mathbf{1}_{\underline{\sigma}}^{\underline{\tau}}\,S_{\underline{\gamma}%
}^{\underline{\varsigma}\underline{\sigma}}%
\]
where $\underline{\bar{\rho}}=\left(  \bar{\rho}_{\rho},\dots,\bar{\rho}%
_{1}\right)  $ with $\bar{\rho}=$ antiparticle of $\rho$ and $\underline
{\theta^{\prime}}_{\bar{\rho}}-i\pi_{-}$ means that all rapidities taken the
values $\theta^{\prime}-i\left(  \pi-\epsilon\right)  $. The matrix
$\mathbf{1}_{\underline{\sigma}}^{\underline{\tau}}(\underline{\theta^{\prime
}}_{\tau},\underline{\tilde{\theta}}_{\varkappa})$ is defined by (\ref{B.2})
with $\mathcal{O}=\mathbf{1}$ the unit operator. Summation is over all
decompositions of the sets of rapidities $\underline{\theta}_{\beta}^{\prime}$
and $\underline{\tilde{\theta}}_{\gamma}$. To the matrix element of $\psi$ we
apply the first version of the crossing formula (\ref{B.10})
\[
\psi_{\underline{\alpha}}^{\underline{\gamma}}(\underline{\tilde{\theta}%
}_{\gamma},\underline{\theta}_{\alpha})=\sigma_{(\gamma,\alpha)}^{\psi}\sum_{%
%TCIMACRO{\QTATOP{\underline{\theta}_{\nu}\cup\underline{\theta}_{\eta
%}=\underline{\theta}_{\gamma}}{\underline{\theta}_{\mu}\cup\underline{\theta
%}_{\lambda}=\underline{\theta}_{\alpha}}}%
%BeginExpansion
\genfrac{}{}{0pt}{1}{\underline{\theta}_{\nu}\cup\underline{\theta}_{\eta
}=\underline{\theta}_{\gamma}}{\underline{\theta}_{\mu}\cup\underline{\theta
}_{\lambda}=\underline{\theta}_{\alpha}}%
%EndExpansion
}\zeta_{(\gamma,\alpha,\eta)}^{\psi}S_{\underline{\nu}\underline{\eta}%
}^{\underline{\gamma}}\,\mathbf{1}_{\underline{\mu}}^{\underline{\nu}%
}\,\mathbf{C}^{\underline{\eta}\underline{\bar{\eta}}}\psi_{\underline
{\bar{\eta}}\underline{\lambda}}(\underline{\theta}_{\bar{\eta}}^{\prime}%
+i\pi_{-},\underline{\theta}_{\lambda})S_{\underline{\alpha}}^{\underline{\mu
}\underline{\lambda}}\,.
\]
We insert (\ref{B.10}) and (\ref{B.12}) in (\ref{B.14}) and use the product
formula $S_{\underline{\gamma}}^{\underline{\varsigma}\underline{\sigma}%
}(\underline{\tilde{\theta}}_{\varsigma}\underline{\tilde{\theta}}_{\sigma
};\underline{\tilde{\theta}}_{\gamma})S_{\underline{\nu}\underline{\eta}%
}^{\underline{\gamma}}(\underline{\tilde{\theta}}_{\gamma};\underline
{\tilde{\theta}}_{\nu}\underline{\tilde{\theta}}_{\eta})=S_{\underline{\nu
}\underline{\eta}}^{\underline{\varsigma}\underline{\sigma}}(\underline
{\tilde{\theta}}_{\varsigma}\underline{\tilde{\theta}}_{\sigma};\underline
{\tilde{\theta}}_{\nu}\underline{\tilde{\theta}}_{\eta})$. Let us first assume
that the sets rapidities in the initial state $\underline{\theta}_{\alpha}$
and the ones of the final state $\underline{\theta}_{\beta}^{\prime}$ have no
common elements. This also implies $\underline{\tilde{\theta}}_{\nu}%
\cap\underline{\tilde{\theta}}_{\sigma}=\emptyset$. Then we may use (ii) to
get $S_{\underline{\nu}\underline{\eta}}^{\underline{\varsigma}\underline
{\sigma}}(\underline{\tilde{\theta}}_{\varsigma}\underline{\tilde{\theta}%
}_{\sigma};\underline{\tilde{\theta}}_{\nu}\underline{\tilde{\theta}}_{\eta
})=\delta_{\underline{\nu}\underline{\pi}}^{\underline{\varsigma}%
\underline{\sigma}}$ and then we can perform the $\underline{\tilde{\theta}%
}_{\nu}$- and $\underline{\tilde{\theta}}_{\sigma}$-integrations. The
remaining $\tilde{\theta}$-integration variables are $\underline{\tilde
{\theta}}_{\omega}=\underline{\tilde{\theta}}_{\varsigma}\cap\underline
{\tilde{\theta}}_{\eta}$. Then we may write for the sets of particles
$\underline{\varsigma}=\underline{\mu}\underline{\omega},\,\underline{\eta
}=\underline{\omega}\underline{\tau}$ and $\underline{\gamma}=\underline{\mu
}\underline{\omega}\underline{\tau}$ and similar for rapidities and momenta.
Equation (\ref{B.14}) simplifies to%
\begin{multline}
\left(  \phi(x)\psi(y)\right)  _{\underline{\alpha}}^{\underline{\beta}%
}(\underline{\theta}_{\beta}^{\prime},\underline{\theta}_{\alpha}%
)=\sum_{\substack{\underline{\theta}_{\rho}^{\prime}\cup\underline{\theta
}_{\tau}^{\prime}=\underline{\theta}_{\beta}^{\prime}\\\underline{\theta}%
_{\mu}\cup\underline{\theta}_{\lambda}=\underline{\theta}_{\alpha}}}\frac
{\mu!\tau!}{\mu\omega\tau!}S_{\underline{\rho}\underline{\tau}}^{\underline
{\beta}}(\underline{\theta}_{\rho}^{\prime},\underline{\theta}_{\tau}^{\prime
})\int_{\underline{\tilde{\theta}}_{\omega}}X_{\underline{\mu}\underline
{\lambda}}^{\underline{\rho}\underline{\tau}}\label{B.16}\\
\times S_{\underline{\alpha}}^{\underline{\mu}\underline{\lambda}}%
(\underline{\theta}_{\alpha})e^{i\left(  P_{\rho}^{\prime}-P_{\mu}\right)
x-i\left(  P_{\lambda}-P_{\tau}^{\prime}\right)  y}%
\end{multline}
where%
\begin{multline}
X_{\underline{\mu}\underline{\lambda}}^{\underline{\rho}\underline{\tau}%
}=\sigma_{(\gamma,\alpha)}^{\psi}\zeta_{(\gamma,\alpha,\eta)}^{\psi}%
\xi_{(\beta,\gamma,\rho)}^{\phi}\phi_{\underline{\mu}\underline{\omega
}\underline{\bar{\rho}}}(\underline{\theta}_{\mu},\underline{\tilde{\theta}%
}_{\omega},\underline{\theta^{\prime}}_{\bar{\rho}}-i\pi_{-})\label{B.18}\\
\times\mathbf{C}^{\underline{\bar{\rho}}\underline{\rho}}\mathbf{C}%
^{\underline{\tau}\underline{\bar{\tau}}}\mathbf{C}^{\underline{\omega
}\underline{\bar{\omega}}}\psi_{\underline{\bar{\tau}}\underline{\bar{\omega}%
}\underline{\lambda}}(\underline{\theta}_{\bar{\tau}}^{\prime}+i\pi
_{-},\underline{\tilde{\theta}}_{\bar{\omega}}+i\pi_{-},\underline{\theta
}_{\lambda})e^{-i\tilde{P}_{\omega}(x-y)}\,.
\end{multline}

Similarly, if we apply for the operator product $\psi(y)\phi(x)$ and%
\[
\left(  \psi(y)\phi(x)\right)  _{\underline{\alpha}}^{\underline{\beta}%
}(\underline{\theta}_{\beta}^{\prime},\underline{\theta}_{\alpha
})=e^{iP_{\beta}^{\prime}x-iP_{\alpha}y}\frac{1}{\delta!}\int_{\underline
{\tilde{\theta}}_{\delta}}\psi_{\underline{\delta}}^{\underline{\beta}%
}(\underline{\theta}_{\beta}^{\prime},\underline{\tilde{\theta}}_{\delta}%
)\phi_{\underline{\alpha}}^{\underline{\delta}}(\underline{\tilde{\theta}%
}_{\delta},\underline{\theta}_{\alpha})e^{-i\tilde{P}_{\delta}(y-x)}\,,
\]
use the second crossing formula to the matrix element of $\phi$
\[
\phi_{\underline{\alpha}}^{\underline{\delta}}(\underline{\tilde{\theta}%
}_{\delta},\underline{\theta}_{\alpha})=\sum_{\substack{\underline
{\tilde{\theta}}_{\varphi}\cup\underline{\tilde{\theta}}_{\kappa}%
=\underline{\tilde{\theta}}_{\delta}\\\underline{\theta}_{\mu}\cup
\underline{\theta}_{\lambda}=\underline{\theta}_{\alpha}}}\xi_{(\delta
,\alpha,\varphi)}^{\phi}S_{\underline{\varphi}\underline{\kappa}}%
^{\underline{\delta}}\,\phi_{\underline{\mu}\underline{\bar{\varphi}}%
}(\underline{\theta}_{\mu},\underline{\tilde{\theta}}_{\bar{\varphi}}-i\pi
_{-})\mathbf{C}^{\underline{\varphi}\underline{\bar{\varphi}}}\,\mathbf{1}%
_{\underline{\lambda}}^{\underline{\kappa}}\,S_{\underline{\alpha}%
}^{\underline{\mu}\underline{\lambda}}%
\]
and the first one to the matrix element of $\psi$
\[
\psi_{\underline{\delta}}^{\underline{\beta}}(\underline{\theta}_{\beta
}^{\prime},\underline{\tilde{\theta}}_{\delta})=\dot{\sigma}_{(\beta,\delta
)}^{\psi}\sum_{%
%TCIMACRO{\QTATOP{\underline{\theta}_{\rho}^{\prime}\cup\underline{\theta
%}_{\tau}^{\prime}=\underline{\theta}_{\beta}^{\prime}}{\underline
%{\tilde{\theta}}_{\xi}\cup\underline{\tilde{\theta}}_{\chi}=\underline
%{\tilde{\theta}}_{\delta}}}%
%BeginExpansion
\genfrac{}{}{0pt}{1}{\underline{\theta}_{\rho}^{\prime}\cup\underline{\theta
}_{\tau}^{\prime}=\underline{\theta}_{\beta}^{\prime}}{\underline
{\tilde{\theta}}_{\xi}\cup\underline{\tilde{\theta}}_{\chi}=\underline
{\tilde{\theta}}_{\delta}}%
%EndExpansion
}\zeta_{(\beta,\delta,\tau)}^{\psi}S_{\underline{\rho}\underline{\tau}%
}^{\underline{\beta}}\,\mathbf{1}_{\underline{\xi}}^{\underline{\rho}%
}\,\mathbf{C}^{\underline{\tau}\underline{\bar{\tau}}}\psi_{\underline
{\bar{\tau}}\underline{\chi}}(\underline{\theta}_{\bar{\tau}}^{\prime}%
+i\pi_{-},\underline{\tilde{\theta}}_{\chi})S_{\underline{\delta}}%
^{\underline{\xi}\underline{\chi}}\,.
\]
Similarly we obtain (with $\underline{\chi}=\underline{\bar{\omega}}%
\underline{\lambda},\,\underline{\varphi}=\underline{\bar{\omega}}%
\underline{\rho},~\underline{\delta}=\underline{\mu}\underline{\bar{\omega}%
}\underline{\tau}$) equation (\ref{B.16}) where $X_{\underline{\mu}%
\underline{\lambda}}^{\underline{\rho}\underline{\tau}}$ replaced by%
\begin{multline}
Y_{\underline{\mu}\underline{\lambda}}^{\underline{\rho}\underline{\tau}%
}=\sigma_{(\beta,\delta)}^{\psi}\zeta_{(\beta,\delta,\tau)}^{\psi}\xi
_{(\delta,\alpha,\varphi)}^{\phi}\phi_{\underline{\mu}\underline{\omega
}\underline{\bar{\rho}}}(\underline{\theta}_{\mu},\underline{\tilde{\theta}%
}_{\omega}-i\pi_{-},\underline{\theta^{\prime}}_{\bar{\rho}}-i\pi
_{-})\label{B.20}\\
\times\mathbf{C}^{\underline{\bar{\rho}}\underline{\rho}}\mathbf{C}%
^{\underline{\tau}\underline{\bar{\tau}}}\mathbf{C}^{\underline{\omega
}\underline{\bar{\omega}}}\psi_{\underline{\bar{\tau}}\underline{\bar{\omega}%
}\underline{\lambda}}(\underline{\theta}_{\bar{\tau}}^{\prime}+i\pi
_{-},\underline{\tilde{\theta}}_{\bar{\omega}},\underline{\theta}_{\lambda
})e^{i\tilde{P}_{\omega}(x-y)}%
\end{multline}
which means that only $\sigma_{(\gamma,\alpha)}^{\psi}\zeta_{(\gamma
,\alpha,\eta)}^{\psi}\xi_{(\beta,\gamma,\rho)}^{\phi}$ is replaced by
$\sigma_{(\beta,\delta)}^{\psi}\zeta_{(\beta,\delta,\tau)}^{\psi}\xi
_{(\delta,\alpha,\varphi)}^{\phi}$ and the integration variables
$\underline{\tilde{\theta}}_{\omega}$ by $\underline{\tilde{\theta}}%
_{\bar{\omega}}-i\pi_{-}$, i.e. $\tilde{P}_{\omega}$ by $-\tilde{P}_{\omega}$.

If there were no bound states, there would be no singularities in the physical
strip and we could shift in the matrix element of $\psi(y)\phi(x)$
(\ref{B.16}) with (\ref{B.20}) for equal times and $x^{1}<y^{1}$ the
integration variables by $\tilde{\theta}_{i}\rightarrow\tilde{\theta}_{i}%
+i\pi_{-}$. Note that the factor $e^{i\tilde{P}_{\omega}(x-y)}$ decreases for
$0<\operatorname{Re}\tilde{\theta}_{i}<\pi$ if $x^{1}<y^{1}$. Because
$\tilde{P}_{\omega}\rightarrow-\tilde{P}_{\omega}$ (if $\underline
{\tilde{\theta}}_{\omega}\rightarrow\underline{\tilde{\theta}}_{\bar{\omega}%
}-i\pi_{-}$) we get the matrix element of $\phi(x)\psi(y)$ (\ref{B.16}) with
(\ref{B.18}) up to the factor%
\[
\frac{\sigma_{(\gamma,\alpha)}^{\psi}\zeta_{(\gamma,\alpha,\eta)}^{\psi}%
\xi_{(\beta,\gamma,\rho)}^{\phi}}{\sigma_{(\beta,\delta)}^{\psi}\zeta
_{(\beta,\delta,\tau)}^{\psi}\xi_{(\delta,\alpha,\varphi)}^{\phi}}=e^{-2\pi
i\frac{1}{2}\left(  1-\frac{1}{N}\right)  Q_{\phi}Q_{\psi}}\,.
\]
This equality follows after a long and cumbersome but straightforward
calculation. In \cite{BFK} was shown that we obtain the same result if there
are bound states.
\end{proof}

%\bibliographystyle{phreport}
%\bibliography{../ref}

\end{document}